\newcommand\copyrighttext{%
  \footnotesize \textcopyright 2019 IEEE. Personal use of this material is permitted.  Permission from IEEE must be obtained for all other uses, in any current or future media, including reprinting/republishing this material for advertising or promotional purposes, creating new collective works, for resale or redistribution to servers or lists, or reuse of any copyrighted component of this work in other works.}
\newcommand\copyrightnotice{%
\begin{tikzpicture}[remember picture,overlay]
\node[anchor=south,yshift=0pt] at (current page.south) {\fbox{\parbox{\dimexpr\textwidth-\fboxsep-\fboxrule\relax}{\copyrighttext}}};
\end{tikzpicture}%
}
\newtheorem{theorem}{Theorem}
\newtheorem{corollary}{Corollary}
\begin{document}

\title{Assignment and Pricing of Shared Rides in Ride-Sourcing using Combinatorial Double Auctions}

\author{Renos~Karamanis, Eleftherios~Anastasiadis, Panagiotis~Angeloudis and~Marc~Stettler
\thanks{R. Karamanis, E. Anastasiadis, P. Angeloudis and M. Stettler are with the Department
of Civil and Environmental Engineering, Imperial College London,
UK, e-mail: renos.karamanis10@imperial.ac.uk.}}

\IEEEtitleabstractindextext{
\begin{abstract}
Transportation Network Companies employ dynamic pricing methods at periods of peak travel to incentivise driver participation and balance supply and demand for rides. 
Surge pricing multipliers are commonly used and are applied following demand and estimates of customer and driver trip valuations. 
Combinatorial double auctions have been identified as a suitable alternative, as they can achieve maximum social welfare in the allocation by relying on customers and drivers stating their valuations.
A shortcoming of current models, however, is that they fail to account for the effects of trip detours that take place in shared trips and their impact on the accuracy of pricing estimates.
To resolve this, we formulate a new shared-ride assignment and pricing algorithm using combinatorial double auctions.
We demonstrate that this model is reduced to a maximum weighted independent set model, which is known to be APX-hard.
A fast local search heuristic is also presented, which is capable of producing results that lie within 10\% of the exact approach for practical implementations. Our proposed algorithm could be used as a fast and reliable assignment and pricing mechanism of ride-sharing requests to vehicles during peak travel times. 
\end{abstract}

\begin{IEEEkeywords}
Transportation Network Companies, Ride-Sourcing, Ride-Sharing, Combinatorial Double Auctions
\end{IEEEkeywords}}
\IEEEoverridecommandlockouts

\maketitle

\IEEEpubidadjcol
\IEEEdisplaynontitleabstractindextext
\IEEEpeerreviewmaketitle
\copyrightnotice{}

\section{Introduction}
\IEEEPARstart{T}{he} recent proliferation of Transportation Network Companies (TNCs) has been facilitated by an increasing demand for efficient, economic and personalised modes of urban mobility. 
TNCs have quickly captured significant share of the urban mobility market, by providing a service that is usually cheaper than taxis, more convenient than public transport, and an effective alternative to private car ownership. 
Their success has been underpinned by the use of powerful algorithms and analytics, which helped reduce waiting times and increase fleet utilisation \cite{maciejewski2016assignment,Zha2016, doi:10.1080/01441647.2018.1497728}.

To maintain a balance between the supply and demand for rides, TNCs frequently apply dynamic pricing measures \cite{Chen2016} usually taking the form of variable surge tariff multipliers.
Such measures can motivate drivers to attend under-served areas, dampen demand by eliminating requests from riders that are delaying their departure, and also incentivize shared rides between customers or the use of public transport \cite{banerjee2015pricing}.

Through these methods, TNCs effectively operate a two-sided market, to the benefit of both the drivers (supply) and the riders (demand).
As many major TNCs consider to deploy Autonomous Vehicles (AVs) in the near future, their platforms are likely to be transformed into one-sided markets, where they will enjoy complete control of the supply \cite{qiu2018dynamic}. 
Previous work by Karamanis et al. \cite{Karamanis2018} demonstrated that such platforms can still incentivise the use of shared rides or public transport while remaining profitable.

Existing dynamic pricing methods suggest new equilibrium prices to customers without having prior knowledge of their trip valuations. 
If these are considered, market equilibrium prices could be identified without approximation, therefore transforming this process into an auction. 
Previous work on ride pricing using auction theory \cite{Zhang2016, lam2016combinatorial, Asghari2017, Yu2018} focused on the interactions between riders (bidders) and drivers (sellers) who are expected to declare their valuations and costs for prospective rides. 

A TNC platform, taking the role of the auctioneer, would be responsible for determining the winner of each auction \cite{Yu2018}. 
Possible auction settings might involve one or multiple drivers that are assigned to customers sequentially or simultaneously. 
Manipulations of the auctions by either side can be avoided through the use of mechanism design theory, and the analysis of participation incentives. 

Combinatorial Double Auctions (CDAs) \cite{Xia2005} can be used to allocate multiple drivers to riders simultaneously and efficiently\footnote{Economically efficient auction allocations maximise social welfare.} using linear programs that are commonly referred to as winner determination problems (WDP) \cite{lam2016combinatorial}, which are known to be NP-hard.
WDPs can be formulated as set packing problems that maximise the auctioneer's revenue (or social welfare) while taking into account the utilities of the participants \cite{Lehmann2005}. 

Research in dynamic ride-sharing (DRS) (carpooling), is particularly relevant, with several studies exploring the applicability of auction models between commuting drivers and riders \cite{Kleiner2011, Zhao2014, Zhang2016}. 
In \cite{Zhang2016}, the authors propose a CDA discounted trade reduction mechanism for DRS assignment and pricing. 
The proposed mechanism is found to be incentive-compatible \footnote{Incentive-compatible mechanisms ensure that every participant is incentivised to be truthful.}, individually rational\footnote{Individual rationality ensures that no participant incurs a loss. \cite{deVries}} and weakly budget-balanced\footnote{Weakly budget-balanced mechanisms ensure that auctioneer will not incur a loss \cite{Zhao2014}}. 
A system of parallel DRS auctions was proposed in \cite{Kleiner2011} aiming to identify rider-driver matches that minimise detours. 
A DRS model using mechanism design was presented in \cite{Zhao2014}, demonstrating that maximum social welfare cannot be feasibly reached while incentivising the participation of commuters and truthful reporting of trip reservation prices. 

Lam \cite{lam2016combinatorial} models the allocation of AV seats to customers as a combinatorial auction, using the Vickrey-Clarkes-Grove (VCG) mechanism to sequentially assign customers to vehicles and determine prices. 
Three types of service are considered: private rides, shared rides and requests split over multiple vehicles. A separate study developed a CDA model for dial-a-ride AV fleets \cite{Yu2018} where multiple customers and AV operators submit bids,  while a platform determines allocations that maximise social welfare. 
The model is applied for three types of service as in \cite{lam2016combinatorial}, with prices computed using a relaxed version of the problem with Lagrangian multipliers. 
The algorithm is shown to be NP-hard, but optimal solutions can still be obtained for realistic problem instances in reasonable times. 
Another proposed technique \cite{Asghari2016,Asghari2017} involves a truthful DRS mechanism based on a second-price auction with reserve prices.

The majority of studies on ride-sharing auctions (Table \ref{tab:research}) use two-dimensional models that perform one-to-one assignments between buyers and sellers. 
Nonetheless, DRS outputs inherently consist of one-to-many assignments for trips that contain at least three participants (one driver, two riders), whose trip-time utilities are interdependent. This limitation was partially addressed by previous studies \cite{lam2016combinatorial, Yu2018} which, however, did not consider detour effects. 
An alternative approach  \cite{Asghari2017} utilised sequential rider-vehicle matches, but without accounting for the effect of detours on valuations, assignments and pricing estimates.

To address this literature gap, we develop a mathematical model that considers the effects of shared-ride detours through a winner determination process.
This implements a sealed-bid CDA, with simultaneous driver-rider assignments that seek to maximise the total trade surplus. 
To reduce the problem search space, we build upon the concept of shareability networks \cite{santi2014quantifying}, and transform the formulation into a Maximum Weighted Independent Set Problem (MWIS), which is known to be APX-hard. 

Our contribution is summarised as follows:
\begin{enumerate}
    \item We propose a WDP model for DRS assignment, implementing a CDA while considering the effect of detours on the valuations of auction participants.
    \item We provide a local search algorithm which produces approximate results in polynomial time using greedy heuristic solutions as initializers.
    \item We identify the effects of shill bidding on our proposed CDA and suggest a robust trip price determination methodology.
\end{enumerate}

The paper is structured as follows:
Section \ref{sec:3} outlines our proposed assignment and pricing methodology for shared rides.
An exact implementation of the model and an approximate heuristic are described in section \ref{sec:4} alongside a case study for a hypothetical TNC in New York.
Findings and recommendations for future work are provided in section \ref{sec:5}.

\begin{table}[]
\small
\caption{Auction studies on ride-sharing assignment}
\label{tab:research}
\begin{tabular}{@{}lllll@{}}
\toprule
Study & Problem & Auction & Assignment & Detours \\ \midrule
\cite{Zhang2016} & DRS & CDA & one-to-one & No \\
\cite{Kleiner2011} & DRS & Vickrey & one-to-one & Yes \\
\cite{Zhao2014} & DRS & CDA & one-to-one & Yes \\
\cite{lam2016combinatorial} & DARP & VCG & \begin{tabular}[c]{@{}l@{}} one-to-many\end{tabular} & No \\
\cite{Yu2018} & DARP & CDA & \begin{tabular}[c]{@{}l@{}}one-to-many\end{tabular} & No \\
\cite{Asghari2017}, \cite{Asghari2016} & DARP & \begin{tabular}[c]{@{}l@{}}VCG\end{tabular} & \begin{tabular}[c]{@{}l@{}}one-to-one \end{tabular} & Yes \\ \bottomrule
\end{tabular}
\end{table}

\section{Methodology} \label{sec:3}
Our model assumes that travellers request shared rides through a central TNC platform that operates its own vehicle fleet.
Alongside origin/destination coordinates, travellers also submit their trip valuations. 
Vehicles have a fixed per-minute cost rate that is known in advance by the platform. 
The objective of the model is to maximise the trade surplus, defined as the sum of differences between traveller valuations and vehicle costs.

Assignments are performed in intervals with duration $\Delta$ - given the larger pool of possible matches; this quasi-online approach is expected to outperform a possible first-in-first-out (FIFO) alternative (\cite{santi2014quantifying, Yu2018}).
Two assignment types are considered: the first is between riders willing to share a trip (i.e. rider-rider), and the latter pertains to vehicles that would like to offer trips (vehicle-riders). 
In both cases, the algorithm seeks to identify potentially combinable requests, therefore establishing shareability networks \cite{santi2014quantifying} that serve as inputs to the CDA model alongside rider trip valuations. 
Any vehicles or travellers that are not matched by the CDA are deferred to later model executions alongside any requests that might have emerged in the meantime.

\subsection{Pre-matching} \label{sec:3.1}
The pre-matching stage is used to filter incompatible\footnote{Incompatible combinations produce large wait and/or detour times for riders in the combination.} vehicle-rider and rider-rider combinations before the execution of the CDA, therefore reducing instance sizes without penalising solution quality.
Quality indices $\delta_w$ and $\delta_d$ are used to reflect the maximum allowable rider wait time, and detour\footnote{Detour is defined as the additional in-vehicle time of a shared trip from a private trip that a rider might experience.} respectively. 
Let $R$ represent a set of ride requests and $K$ a set of vehicles operated by the platform. 

For each vehicle $k\in K$ we seek to obtain a subset $N_k \subseteq R$ that the vehicle can access within a period with approximate duration $\delta_w$. 
Conversely, for each ride request $r \in R$, we seek to identify a subset $A_r \subseteq K$ that can be picked up within $\delta_w$. 

A ride request $r$ is placed in $N_k$ and a vehicle $k$ is placed in $A_r$ according to Algorithm \ref{vehicle2rider} if condition $C_0$ (eq. \eqref{eq:1}) is met, where $T(\langle k,r\rangle)$ is the travel time from the current location of vehicle $k$ to the origin of request $r$, and $T(c)$ is the execution time of a stop sequence $c$.

\begin{align}
    C_0: \quad T(\langle k,r\rangle) \leq \delta_w \label{eq:1}
\end{align}

\begin{algorithm}
\caption{Prematching check: Vehicle-Rider} \label{vehicle2rider}
\begin{algorithmic}
\For{$k \in K$}
\For{$r \in R$}
\If{$C_0$}
\State $N_k \gets N_k \cup r$
\State $A_r \gets A_r \cup k$
\EndIf
\EndFor
\EndFor
\end{algorithmic}
\end{algorithm}

In the case of rider-rider matching,  we obtain the subset of second requests $I_{r} \subseteq R \setminus r$ that can be matched with a request $r \in R$ and executed with a detour lasting $\delta_d$ or less. 
We also obtain a subset of requests $J_r \subseteq R \setminus r$ that can be matched with $r$ as the second rider in the vehicle, also with a detour of $\delta_d$ or less. 
As such, for every request pair $i,j \in R, i \neq j$ where $i$ and $j$ are the first and second rider, respectively, there exists a set of origin-destination combinations $\langle o_i, o_j, d_i, d_j\rangle$ and $\langle o_i, o_j, d_j, d_i\rangle$. The following conditions apply:

\begin{align}
C_1: & \quad T(\langle o_i, o_j, d_i\rangle) \leq P_i + \delta_d  \label{eq:2}\\
C_2: & \quad T(\langle o_i, o_j, d_i, d_j\rangle) \leq P_j + \delta_d \label{eq:3}\\
C_3: & \quad T(\langle o_i, o_j, d_j, d_i\rangle) \leq P_i + \delta_d \label{eq:4}\\
C_4: & \quad T(\langle o_i, o_j, d_j\rangle) \leq P_j + \delta_d \label{eq:5}
\end{align}

In eq. \eqref{eq:2}-\eqref{eq:5}, $P_r$ represents the travel time for a private trip $r \in R$. Algorithm \ref{rider2rider} is used to prematch rider pairs - since these are obtained alongside vehicle-rider pairs the complexity of these operations relates to the cardinality\footnote{$|S|$ denotes the cardinality of any set $S$ in this paper.} of set $R$ and is  $O(|R|^2)$ \cite{santi2014quantifying}. The maximum possible total detour and waiting time for any rider $r \in R$ once the assignment is confirmed is $\delta_w + \delta_d$ due to pre-matching.

\begin{algorithm}
\caption{Prematching check: Rider-Rider} \label{rider2rider}
\begin{algorithmic}
\For{$i \in R$}
\For{$j \in R \setminus i$}
\If{$(C_1 \wedge C_2) \vee (C_3 \wedge C_4)$}
\State $I_i \gets I_i \cup j$
\State $J_j \gets J_j \cup i$
\EndIf
\EndFor
\EndFor
\end{algorithmic}
\end{algorithm}

The resulting adjacency subsets $N_k$, $A_r$, $I_r$ and $J_r$ can be visualised using a network where nodes represent vehicles or ride requests. 
A link from a vehicle $k$ to rider $r$ exists if $r \in N_k$ (and consequently $k \in A_r$), whereas a link between riders $i$ and $j$ exists if $j \in I_i$ (and consequently $i \in J_j$) or vice-versa. 
Figures \ref{fig:1a} and \ref{fig:1b} illustrate the auction participants' initial locations and the result of pre-matching respectively, in a randomly generated problem instance of 20 vehicles and 40 riders in Manhattan, New York City. 

\begin{figure}[]
\centering
\mbox{
\subfloat[]{\includegraphics[width=1.53in]{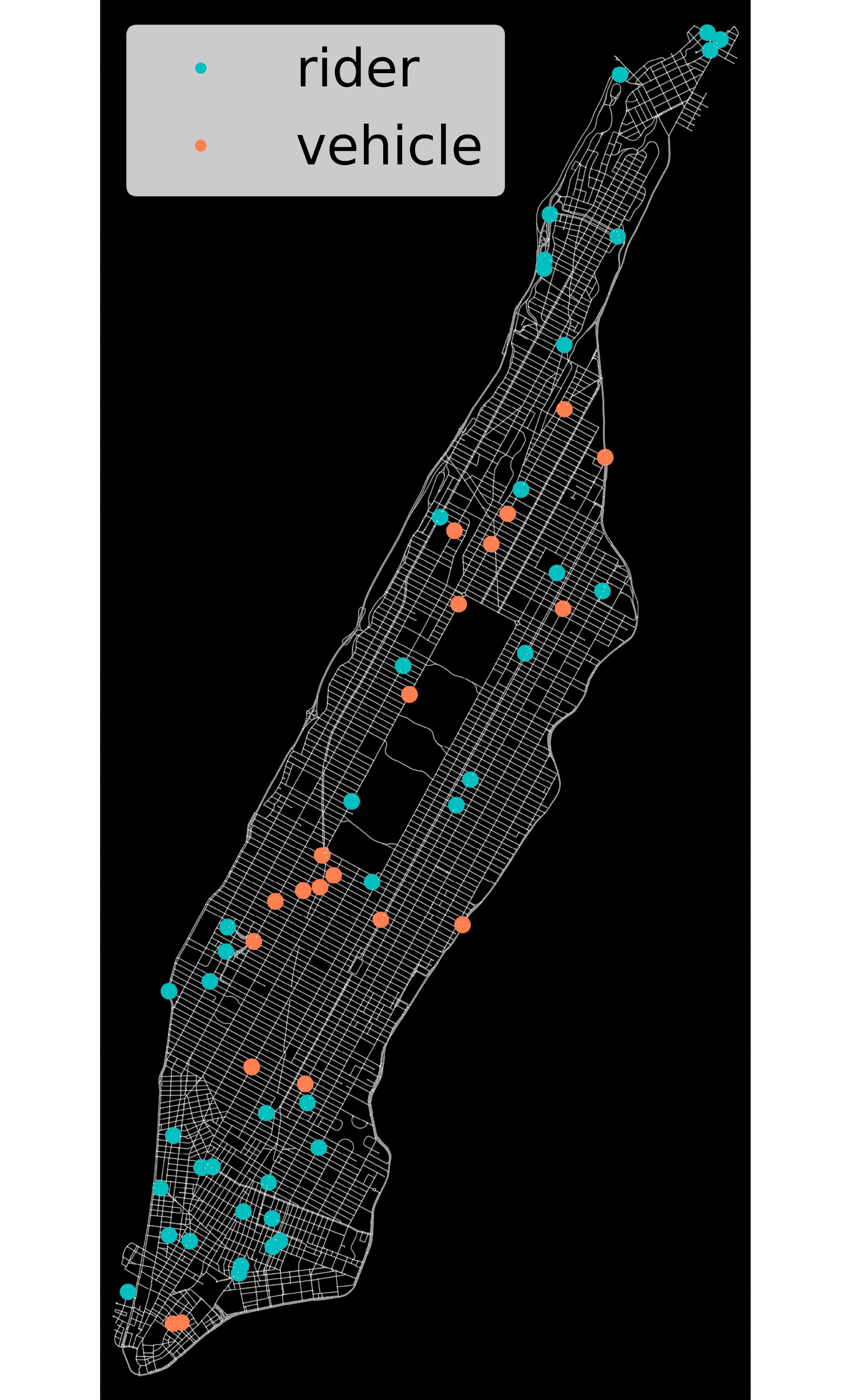}\label{fig:1a}}
\subfloat[]{\includegraphics[width=1.53in]{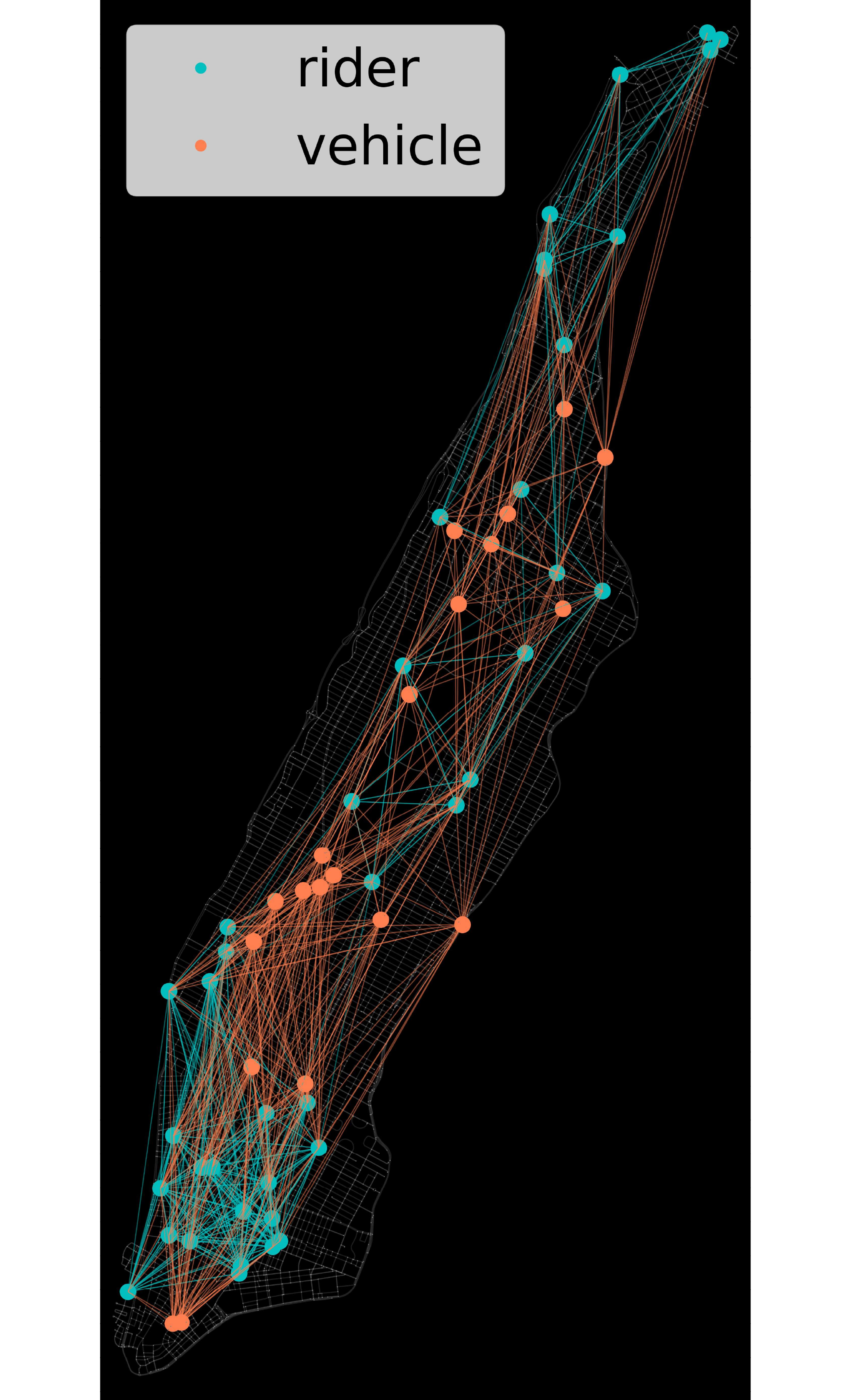} \label{fig:1b}}
}

\caption{Problem instance before (a) and after (b) pre-matching.} \label{fig:1}
\end{figure}

\subsection{Combinatorial Double Auction} \label{sec:3.2}
Our auction model builds upon \cite{Yu2018} by introducing a trading good and applying a shareability network to reduce search space.
Furthermore, it takes into account the quality of shared trips and the proximity of vehicles to achieve higher time savings.
As a result, riders would obtain different overall trip valuations when matched to different passengers or vehicles, while the pool of potential assignments would be further honed due to the use of a trip compatibility network. 
Without loss of generality, we assume that individual trip requests only consist of single riders. 
This assumption can be relaxed to extend the model to cater for larger passenger groups. 

We consider a set of riders $R$ and a set of vehicles $K$. 
Each rider $r \in R$ is identified as a 6-element tuple $\langle F_r, C_r, P_r, I_r, J_r, A_r \rangle$, where 
$F_r$ is the maximum reservation price,
$C_r$ is the time valuation,
$P_r$ is an 1D array of vehicle travel times required for a private trip (pick-up to drop-off),
while $I_r$, $J_r$ and $A_r$ are the adjacency subsets obtained through pre-matching (Section \ref{sec:3.1}).  

The 3D array $S_{i,j,n}$ represents the remaining vehicle travel time for matched riders $i$ and $j$, once the final passenger is picked up, with the pick-up sequence in the order $\langle o_i,o_j \rangle$. 
We use 3 dimensions for $S_{i,j,n}$ to account for $i$ and $j$ having different remaining travel times once $j$ is picked up. 
For example, if $i$ is dropped off first, the remaining time for $i$ might be $T(\langle o_j, d_i\rangle)$, whereas the remaining time for $j$ could be $T(\langle o_j,d_i, d_j\rangle)$. 
At the same time, the remaining travel time for the vehicle would be $T(\langle o_j,d_i, d_j\rangle)$. 
Using the procedure described in Algorithm \ref{rider2rider}, we obtain the assignment with the shortest total vehicle time. 
Finally, the index $n$ can take values between $[1,2,3]$, denoting whether $S_{i,j,n}$ refers to the first or final passenger to be picked up, or the vehicle itself, respectively. 

The array $W_{i,j}$ is used to represent the vehicle travel time from initial vehicle locations or rider origins $i$ to rider origins $j$, for $i \in {K \cup R}$, $j \in R$. 
The binary decision variable $x_{i,j} \in \{0,1\}$ is used to indicate if a vehicle or request $i$ is assigned by the action to request $j$, such that $i \in {K \cup R}$, $j \in R$. Then let: 

\begin{equation}
T_{r, 1} = \sum_{i \in I_r}{\big[x_{r+|K|,i}\big(W_{r+|K|,i}+S_{r,i,1}\big)\big]} \label{eq:driving_time_passenger1}
\end{equation}

\begin{equation}
T_{r, 2} = \sum_{i \in J_r}{\big[x_{i+|K|,r}\big(W_{i+|K|,r}+S_{i,r,2}\big)\big]} \label{eq:driving_time_passenger2}
\end{equation}

\noindent
denote the driving times from the pick-up location of the first passenger to drop-off location of the first and second passenger respectively. Similarly, let: 

\begin{equation}
T_{r,3} = \sum_{i \in I_r}\big[x_{r+|K|,i}\big(W_{r+|K|,i}+S_{r,i,3}\big)\big] \label{eq:driving_time_both_passengers}
\end{equation}

\noindent
be the driving time from the pick-up location of the first passenger to the drop-off location of the last passenger. The total service time $t_r$ of each request $r$ is therefore defined as follows\footnote{The wait time from the initial vehicle location to first passenger pickup which the second passenger experiences is omitted for complexity reasons.}:

\begin{equation}
     t_r = \sum_{k \in A_r}{\Big[x_{k,r} W_{k,r}} + x_{k,r} T_{r,1} \Big] + T_{r,2} \label{eq:9}
\end{equation}

Using the waiting and travel time from \eqref{eq:9} we can define the reservation price $f(r)$ for rider $r$ as follows:

\begin{equation}
    f(r) = F_r - C_rt_r \label{eq:10}
\end{equation}

The utility $u_r$ of a rider with respect to request $r$ is:

\begin{equation}
    u_r = 
    \begin{cases}
        f(r) - \sum_{k \in K} p_{k,r}(t_r)& \text{if $r$ can be served}\\
        0 & \text{otherwise}
    \end{cases} \label{eq:11}
\end{equation}

\noindent
where $p_{k,r}(t_r)$ in \eqref{eq:11} is the corresponding service charge for rider $r$ when is assigned to vehicle $k$, as a function of the travel time $t_r$. 
Its value is determined by the platform and is equal to zero if vehicle $k$ is not assigned to request $r$. 
Each available vehicle $k \in K$, is described as a 3-tuple $\langle B_k, Q_k, N_k \rangle$; where $B_k$ is its marginal operational cost, $Q_k$ is its capacity before assignment and $N_k$ is a subset defining riders in its vicinity (calculated as per Section \ref{sec:3.1}). 
We define the travel time $d_k$ to serve a particular set of riders for vehicle $k$, from starting to travel to the first rider until the delivery of the last rider as follows:

\begin{equation}
    d_k = \sum_{r \in N_k}\Big[x_{k,r}W_{k,r} + x_{k,r} T_{r,3}\Big] \label{eq:12}
\end{equation}

Using eq. \eqref{eq:12}, we define the cost of serving the riders assigned to each vehicle $k$ as:

\begin{equation}
    b(k)=B_kd_k \label{eq:13}
\end{equation}

As such, the total utility for vehicle $k$ when included in the auction process is defined by:

\begin{equation}
    \mu_k = 
    \begin{cases}
        \sum_{r \in R} p_{k,r}(t_r) - b(k),& \text{if $k$ serves any ride}\\
        0, & \text{otherwise.}
    \end{cases} \label{eq:14}
\end{equation}

To identify the winners of the auction and the assignment of vehicles to riders, we adopt a WDP methodology that simultaneously considers all rider bids and vehicle costs.
To achieve this, we modify the structure of the existing formulation to ensure that utilities equal to zero if rider $r$ cannot be served or vehicle $k$ is not assigned, for rider and vehicle utilities respectively.

Since $t_r$ and $d_k$ both equal to zero if rider $r$ or vehicle $k$ are not included in any assignments, the versions of the rider utility $u_r$ and vehicle utility $\mu_k$ are transformed as follows:

\begin{align}
    u_r & = X_r F_r - C_rt_r - \sum_{k \in K} p_{k,r}(t_r) & \label{eq:15} \\
    \mu_k & = \sum_{r \in R} p_{k,r}(t_r) - b(k) & \label{eq:16}
\end{align}

\noindent
where the term $X_r = \Big(\sum_{k \in A_r}x_{k,r}+\sum_{i \in I_r}x_{i+|K|,r}\Big)$ indicates whether rider $r$ is in the auction either as a first or as a second client.
The model aims to maximise the total utility of all the participants (vehicles and riders), with the objective function defined as follows:

\begin{equation}
\label{eq:17}
\begin{aligned}
    & SW  =  \sum_{r \in R} u_r + \sum_{k \in K} \mu_k & \\
    & = \sum_{r \in R} \Big(X_r F_r - C_rt_r \Big) - \sum_{k \in K}b(k) &
\end{aligned}
\end{equation}

\noindent
where $SW$ indicates the value of social welfare. Observe that the service charges cancel out in the summation of the participants' utilities. The optimisation problem is then formulated with the following set of constraints:

\textit{Model 1 (Winner Determination Problem for Ride Sharing):}

\begin{maxi!}[3]{}{SW}{}{}{}
    \addConstraint{x_{k,r} + \sum_{i \in I_r}x_{r+|K|,i}}{\leq Q_k,\quad \forall k \in K, \forall r \in N_k}{}{\label{(18b)}}
    \addConstraint{\sum_{k \in A_r} x_{k,r}+\sum_{i \in J_r}x_{i+|K|,r}}{\leq 1,\quad \forall r \in R}{}{\label{(18c)}}
    \addConstraint{\sum_{k \in A_r}x_{k,r}-1}{\leq M \Big(1-\sum_{i \in I_r}x_{r+|K|,i}\Big),\forall r \in R}{}{\label{(18d)}}
    \addConstraint{1-\sum_{k \in A_r}x_{k,r}}{\leq M \Big(1-\sum_{i \in I_r}x_{r+|K|,i}\Big),\forall r \in R}{}{\label{(18e)}}
    \addConstraint{\sum_{r \in R} x_{i,r}}{\leq 1, \quad \forall i \in K \cup R}{}{\label{18f}}
    \addConstraint{\sum_{i \in K \cup R} x_{i,r}}{\leq 1, \quad \forall r \in R}{}{\label{18g}}
    \addConstraint{\sum_{k \in A_r} x_{k,r} - \sum_{i \in I_r} x_{r+|K|,i}}{= 0, \quad \forall r \in R}{}{\label{18h}}
    \addConstraint{x_{i,j} \in \{0,1\}, \quad\forall i,j \in K \cup R}{}{}{\label{(18i)}}
\end{maxi!}

Eq. \eqref{(18b)} ensures that the number of assigned riders to each vehicle $k$ is at most equal to the vehicle capacity $Q_k$ if assigned with a rider $r$. 
\eqref{(18c)} guarantees that if rider $r$ is assigned, it is either the first rider or the second passenger to board. 
Eqs. \eqref{(18d)} and \eqref{(18e)} utilize the Big $M$ method \cite{griva2009linear} to ensure that if any two riders are matched, the first rider $r$ in the matching has to be picked up by a vehicle $k$. 
$M$ is defined as a sufficiently large positive number. 

Eq. \eqref{18f} ensures that each vehicle or rider is assigned as a starting point towards a rider at most once. 
Eq. \eqref{18g} ensures that each rider is assigned as a destination from a vehicle location or a rider no more than once. 
Finally, eq. \eqref{18h} ensures that if a vehicle is connected to a rider, there would be an additional rider in the trip.

Note that eqs. \eqref{eq:9} and \eqref{eq:12} feeding into the objective function, include non-linear terms.
We therefore introduce variables $y_{k,r} \in \mathbb{R}^+$ and $z_{k,r} \in \mathbb{R}^+$, to replace the non-linear terms in equations \eqref{eq:9} and \eqref{eq:12} respectively as shown in equations \eqref{eq:19} and \eqref{eq:20}.

\begin{align}
    t_r & = \sum_{k \in A_r}{\Big(x_{k,r} W_{k,r}} + y_{k,r}\Big)  + T_{r, 2} &   \label{eq:19}\\
    d_k & = \sum_{r \in N_k}\big(x_{k,r}W_{k,r} + z_{k,r}\big) &  \label{eq:20}
\end{align}

Consequently the objective function in equation \eqref{eq:17} transforms into the following: 

\begin{equation}
\label{eq:21}
\begin{aligned}
 & SW_L = \sum_{r \in R} u_r + \sum_{k \in K} \mu_k &\\
    & = \sum_{r \in R} \Big( X_r F_r - C_rt_r\Big) - \sum_{k \in K} B_k d_k & \\
     & = \sum_{r \in R} \Bigg[ X_r F_r - C_r \bigg[\sum_{k \in A_r}{\Big(x_{k,r} W_{k,r}} + y_{k,r}\Big)
    + T_{r,2} \bigg] \Bigg]\\
    & - \sum_{k \in K}B_k \bigg[\sum_{r \in N_k}\Big(x_{k,r}W_{k,r} +z_{k,r}\Big)\bigg] &
\end{aligned}
\end{equation}

\noindent
where $SW_L$ denotes the value of the objective after linearization. To ensure that the variable $y_{k,r}$ equals its desired value, we introduce the following linearization constraints:

\begin{align}
    y_{k,r} & \leq M x_{k,r} & \label{eq:22}\\
    y_{k,r} & \leq T_{r,1} & \label{eq:23}\\
    y_{k,r} & \geq T_{r,1} - M(1-x_{k,r}) & \label{eq:24}\\
    y_{k,r} &\in \mathbb{R}^+ & \label{eq:25}
\end{align}

\noindent
for every $r \in R$ and every $k \in A_r$. In a similar fashion, we introduce the following linearization constraints for variable $z_{k,r}$:

\begin{align}
    z_{k,r} &\leq M x_{k,r} & \label{eq:26}\\
    z_{k,r} &\leq T_{r,3} &\label{eq:27}\\
    z_{k,r} &\geq T_{r,3} - M(1-x_{k,r}) & \label{eq:28}\\
    z_{k,r} &\in \mathbb{R}^+ & \label{eq:29}
\end{align}

\noindent
for every $k \in K$ and every $r \in N_k$.

By incorporating the additional variables and constraints in equations \eqref{eq:19}-\eqref{eq:29}, our optimisation methodology for \textit{Model 1} transforms to the following Mixed Integer Linear Program (MILP):

\textit{Model 2 (Transformed WDP for Ride Sharing)}

\begin{maxi*}[2]{}{SW_L}{}{}{}
    \addConstraint{(18b)\mbox{ - }(18i)}{}{}
    \addConstraint{(22)\mbox{ - }(29)}{}{}
\end{maxi*}

\subsection{Reduction to Maximum Weighted Independent Set} \label{sec:3.3}

To assess the complexity of \textit{Model 2}, we present a reduction to the MWIS problem. 
We assume that in the largest instance, all vehicles can be matched to all requests, and all requests are sharing-compatible. 
In that scenario, with $\mathcal{K}$ and $\mathcal{R}$ being the sets of vehicles and requests, respectively, we let $C$ denote the set of all possible combinations, where $|\mathcal{C}| = |\mathcal{K}| |\mathcal{R}|^2 - |\mathcal{K}| |\mathcal{R}|$. 

Assuming that all vehicles will be assigned, the set of all path-vehicle allocations is $\binom{|\mathcal{K}| |\mathcal{R}|^2 - |\mathcal{K}| |\mathcal{R}|}{|\mathcal{K}|}$. 
To prove the APX-hardness of \textit{Model 2}, we use an approximation-preserving reduction from MWIS.

\begin{theorem}
\textit{Model 2} is NP-Hard
\end{theorem}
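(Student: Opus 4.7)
\textit{Proof plan.} My approach is to recast \textit{Model 2} as a Maximum Weighted Independent Set (MWIS) problem on a \emph{trip conflict graph}, and then to encode an arbitrary MWIS instance inside that graph. Let $\mathcal{C}$ be the set of combinations already introduced, each corresponding to an admissible trip consisting of one vehicle and one or two pre-matched riders. I would build a graph $G_T=(\mathcal{C},E_T)$ whose node weights are the trade-surplus contributions of the corresponding trips to $SW_L$, and whose edges connect any two trips that share a vehicle or a rider. The packing constraints of \textit{Model 2} -- vehicle capacity, at-most-once service per rider, and the coupling constraints forcing any accepted first-passenger trip to be paired with both a vehicle and a second passenger -- exactly enforce that every feasible assignment picks an independent set of $G_T$, while the objective collapses to the total weight of the chosen nodes. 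Hence \textit{Model 2} is equivalent to MWIS on $G_T$.

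To establish NP-hardness I would next reduce MWIS on an arbitrary vertex-weighted graph $G=(V,E)$ to \textit{Model 2} in polynomial time. The outline is to create one vehicle $k_v$ per vertex $v\in V$ and, for each edge $(u,v)\in E$, a shared ``edge rider'' $r_{uv}$ placed in both $N_{k_u}$ and $N_{k_v}$. The parameters $F_r$, $C_r$, $P_r$, $B_k$, $W_{i,j}$ and $S_{i,j,n}$ would be tuned so that (i) the only trip containing $k_v$ that survives Algorithms \ref{vehicle2rider}--\ref{rider2rider} has trade surplus exactly $w_v$, and (ii) every auxiliary rider contributes zero net surplus. The vehicle- and rider-disjointness imposed by the formulation then guarantees that any collection of simultaneously selected vehicles corresponds to an independent set in $G$ of matching total weight, and conversely.

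The main obstacle is that each admissible trip carries at most two riders, so a direct one-edge-per-shared-rider encoding works only for vertices of degree at most two. I would resolve this with a small coupling gadget per high-degree vertex: replicate $k_v$ into several duplicate vehicles tied together by auxiliary riders whose bids are calibrated so that any optimal solution selects either all duplicates or none, and route each incident edge through a distinct duplicate. Verifying that this gadget preserves objective values up to an additive constant -- and that the overall construction runs in time polynomial in $|V|+|E|$ -- is the technical core of the proof. Once this is in place, NP-hardness of \textit{Model 2} follows from that of MWIS; moreover, because the reduction is additive in the objective, it is of L-type and would also yield the APX-hardness claim previewed in the surrounding discussion.
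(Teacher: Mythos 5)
Your first paragraph reproduces the paper's argument almost exactly: the paper likewise builds a trip conflict graph whose vertices are the admissible combinations $(k,i,j)$ with weight $w_c = u_k(k,i,j)+u_i(k,i,j)+u_j(k,i,j)$, joins two vertices whenever they share a vehicle or a rider, and argues that optimal solutions of \textit{Model 2} and of MWIS on this graph coincide. Where you diverge is in what you do next. The paper stops at that correspondence and presents it as the reduction ``from MWIS to \textit{Model 2}''; strictly, though, the construction it gives maps a \textit{Model 2} instance \emph{to} an MWIS instance, which shows that \textit{Model 2} is no harder than MWIS on a restricted family of conflict graphs rather than that it is NP-hard. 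You correctly sense that a hardness proof needs the opposite direction --- embedding an \emph{arbitrary} weighted graph into a ride-sharing instance --- and your second and third paragraphs attempt exactly that. In this respect your plan aims at the logically necessary target, which the paper's own proof does not reach.

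However, the step you yourself call the technical core is a genuine gap, not a formality. Every admissible trip contains one vehicle and at most two riders, so the conflict graph of any \textit{Model 2} instance is an intersection graph of $3$-element sets: the neighbourhood of every vertex is covered by three cliques, and arbitrary graphs cannot be realised directly. Your ``edge rider'' encoding indeed fails already at degree three, since a trip for $k_v$ occupies only two of the incident edge riders and therefore blocks only two of $v$'s neighbours. The proposed repair --- duplicate $k_v$ and calibrate auxiliary riders so that ``all duplicates or none'' are chosen --- does not work as described: in a pure packing/maximisation problem you cannot force a solution to take all duplicates, only make it profitable to, and tying duplicates together through a \emph{shared} auxiliary rider makes them mutually conflicting (at most one could then be selected), which is the opposite of the intended behaviour. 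You would further need to show that the required node weights and adjacencies are realisable by genuine travel times, valuations $F_r$, $C_r$ and costs $B_k$ that survive the pre-matching conditions $C_0$--$C_4$. None of this is hopeless --- weighted $3$-set packing is NP-hard, e.g.\ via $3$-dimensional matching or independent set on bounded-degree graphs --- but as written the gadget is unspecified and its claimed all-or-none behaviour is not achievable by the mechanism you describe, so the reduction is a programme rather than a proof.
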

\begin{proof}
We reduce an instance of MWIS, a known APX-hard\footnote{APX is the complexity class of optimization problems that cannot be approximated within some constant factor unless $P \ne NP$}  problem \cite{papadimitriou1991optimization}, to an instance of \textit{Model 2}. 
Given a weighted graph $G=(V,E,w)$, the MWIS objective is to find a set of pairwise disjoint nodes $S \subseteq V$ with maximum total weight.

Let the tuple $(k,i,j)$ denote the ride-sharing trip of \textit{Model 2} with vehicle $k$ in which the first passenger is $i$ and the second is $j$, $\forall k \in \mathcal{K}, i, j \in \mathcal{R}$ and $i \ne j$. 
Also let $u_{i}(k,i,j)$ and $u_{j}(k,i,j)$ denote the utilities of riders $i$ and $j$ respectively, for the trip $(k,i,j)$ and $u_{k}(k,i,j)$ denote the utility of the vehicle. 

Consider now the following representation; let $G=(V,E,w)$ be a graph where each vertex represents a combination $c=(k,i,j)$.
An edge exists between vertices $c_n$ and $c_m$ if and only if the trip combinations $c_n$ and $c_m$ have a common element, i.e. a common vehicle or rider. 
Let: 

\begin{equation} \label{eq:node-weights}
w_c = u_k(k,i,j) + u_i(k,i,j) + u_j(k,i,j)
\end{equation}

\noindent
denote the weight of vertex $c=(k,i,j)$. 
We note that changing the order of two riders in a combination can result in a different weight for the corresponding vertex. That is because the detour or the wait time after the reordering can exceed either of the thresholds $\delta_d$, $\delta_w$ set during pre-matching, thus resulting in a different value of rider utilities.

We now prove the correctness of the above transformation. 
Let $OPT(I')$ denote an optimal solution to a \textit{Model 2} instance $I'$. 
For any two trip combinations $c, c'$ that either have a common rider or vehicle, at most one of them will be in $OPT(I')$ and the vertices representing these trips will be connected by an edge in graph $G$. 
As a result $OPT(I')$ is represented by a set of independent nodes in $G$ and since the solution is optimal with cost $\sum_{r \in R}u_r+\sum_{k \in K}\mu_k = \sum_{c \in V} w_c$ by equation \eqref{eq:node-weights} this corresponds to an independent set of maximum weight in $G$.

Conversely, suppose we have an optimal solution $OPT(I)$ on an instance $I$ of MWIS in $G$. 
Since $OPT(I)$ is independent, no pair of nodes will be connected, so no pair of trips from WDP will have a common element. 
Again according to eq. \eqref{eq:node-weights}, the total weight of the selected trips is maximised.
\end{proof}

We notice that the above reduction preserves the approximation \cite{papadimitriou1991optimization}. 
Let $f$ be the (polynomial time) transformation from an instance $I'$ of \textit{Model 2} to an instance $I$ of MWIS as described above i.e. $I = f(I')$ and let $g$ be the (polynomial time) algorithm that produces a solution to $I$ given a solution to $I'$. 
Let also $\alpha = 1$ and $\beta = 1$. 
Using transformation $f$, the optima of $I$ and $I'$ satisfy the following inequality $OPT(I') \leq \alpha OPT(I)$. 
Furthermore, having a solution with weight $w'$ for any instance $I'$, we can construct a solution for $I$ with weight $w$ such that $|w - OPT(I)| \leq \beta |w' - OPT(I')|$ using algorithm $g$.

\begin{corollary}
\textit{Model 2} is APX-Hard.
\end{corollary}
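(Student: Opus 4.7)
The plan is to show that the transformation constructed in the theorem is an L-reduction, so that the known APX-hardness of MWIS is inherited by \textit{Model 2}. Since the authors have already fixed $\alpha=\beta=1$, the work reduces to verifying the two L-reduction inequalities and arguing that the correspondence goes from MWIS instances into \textit{Model 2} instances, not merely from \textit{Model 2} back to MWIS as the theorem's proof does.

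First I would formally reverse the direction of the reduction: given an arbitrary weighted graph $G=(V,E,w)$, construct a \textit{Model 2} instance in which every vertex $v\in V$ becomes a candidate tuple $c=(k,i,j)$, and every edge of $G$ corresponds to a shared vehicle or rider between two such tuples. Concretely, the parameters $F_r$, $C_r$, $B_k$ and the travel times $W$, $S$ are chosen so that by equation \eqref{eq:node-weights} the utility sum $w_c = u_k(k,i,j)+u_i(k,i,j)+u_j(k,i,j)$ equals the prescribed vertex weight, while the prematching sets $N_k, A_r, I_r, J_r$ are tailored so that only the intended tuples survive. Polynomial-time constructibility is routine because all parameters are encoded directly from $(G,w)$. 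Next I would verify the two approximation-preserving inequalities: constraints \eqref{(18b)}--\eqref{18g} force any feasible assignment of the $x$ variables to activate a collection of tuples that pairwise share no vehicle and no rider, so by construction these tuples form an independent set in $G$ whose weight equals $SW_L$ by \eqref{eq:node-weights}. This gives $OPT(I')=OPT(I)$, hence $OPT(I')\leq \alpha\, OPT(I)$ with $\alpha=1$. Conversely, algorithm $g$ sends any feasible \textit{Model 2} solution of objective value $w'$ to the corresponding independent set of weight $w=w'$ in $G$, so $|w-OPT(I)|=|w'-OPT(I')|$, yielding $\beta=1$. These two inequalities are exactly the definition of an L-reduction, which preserves membership in APX, so \textit{Model 2} is APX-hard.

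The main obstacle will be the subtlety flagged immediately after equation \eqref{eq:node-weights}: reordering the two riders in a tuple can change the resulting utility because the pre-matching thresholds $\delta_w,\delta_d$ may admit one ordering and prune the other. When building a \textit{Model 2} instance from an abstract MWIS graph I must guarantee that for each intended vertex exactly one ordering is feasible and carries precisely the target weight, while the reverse ordering and any accidental cross-tuples are either infeasible under prematching or contribute zero utility. Once this encoding is pinned down, the corollary follows immediately from the APX-hardness of MWIS \cite{papadimitriou1991optimization}.
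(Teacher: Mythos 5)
You are right that the paper's own argument for the corollary is thin---it simply declares the map $f$ from a \textit{Model 2} instance $I'$ to a MWIS instance $I$, the solution map $g$, and $\alpha=\beta=1$, and asserts the two L-reduction inequalities for that Model-2-to-MWIS direction---and you correctly identify that a hardness claim needs the reduction to run \emph{from} an APX-hard problem \emph{into} \textit{Model 2}. Your high-level strategy (build a \textit{Model 2} instance from a given weighted graph and verify the two inequalities) is therefore the right shape of argument. However, the step you treat as routine is exactly where the plan breaks: an arbitrary weighted graph $G=(V,E,w)$ cannot in general be realised as the conflict graph of \textit{Model 2} tuples. Two tuples $(k,i,j)$ and $(k',i',j')$ are adjacent if and only if the $3$-element sets $\{k,i,j\}$ and $\{k',i',j'\}$ intersect, so every realisable conflict graph is an intersection graph of $3$-sets; in such a graph each vertex's neighbourhood is covered by the three cliques induced by its three participants. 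Already $K_{1,4}$ is unrealisable: four pairwise non-adjacent neighbours of a centre $\{k,i,j\}$ would, by pigeonhole, have to include two tuples sharing the same participant of the centre, forcing an edge between them. So ``given an arbitrary weighted graph, construct a \textit{Model 2} instance'' fails at the first step, and no choice of $F_r$, $C_r$, $W$, $S$ or prematching sets can repair it.

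A second, related gap is that the vertex weights are not free parameters. By \eqref{eq:node-weights}, $w_c$ is induced by travel times and valuations, and tuples that share a rider or vehicle share that participant's $F_r$, $C_r$ or $B_k$ and the underlying (quasi-metric) travel times, so the weights of adjacent vertices are coupled; you would have to exhibit coordinates and valuations that simultaneously realise all prescribed weights, which is far from the ``routine'' encoding you describe. The ordering subtlety you flag as the main obstacle is actually the least of the difficulties. To make the corollary rigorous you should instead reduce from an APX-hard problem whose instances naturally land in this restricted class---for example weighted $3$-set packing or $3$-dimensional matching, or MWIS restricted to bounded-degree intersection graphs of $3$-sets---and then the $\alpha=\beta=1$ bookkeeping you outline goes through.
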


Many greedy approximation algorithms have been previously proposed, with their approximation ratio expressed as a polynomial in terms of the average or maximum node degree in the graph \cite{kako2005approximation}. 
We note that in the fully connected scenario, the average/maximum degree of node $c$ is $\Delta_c = |R|(|R| - 1) - 1 + (|K|-1)(4|R|-6)$. 
To demonstrate this, if we consider a combination $(k,i,j)$, there exist additional $|R|(|R| - 1) - 1$ trip combinations with vehicle $k$. 
For every other vehicle from the remaining $|K|-1$, there exist $2(|R|-1)$ trip combinations including rider $i$ and an additional $2(|R|-2)$ including rider $j$, which are not already accounted. 
Thus, simplifying $(|K|-1)(2(|R|-1) + 2(|R|-2))$ results to $(|K|-1)(4|R|-6)$.

\subsection{Local Search Algorithm using Greedy Search Initialisers} \label{sec:3.4}
We established earlier that solving the MWIS problem for a fully connected CDA scenario, would involve finding a MWIS in graphs with  $|\mathcal{C}| = |\mathcal{K}| |\mathcal{R}|^2 - |\mathcal{K}| |\mathcal{R}|$ nodes with an average/maximum node degree of $\Delta_c = |R|(|R| - 1) - 1 + (|K|-1)(4|R|-6)$.
Considering a small localised example with 10 vehicles and 20 potential riders, that would generate a network with 3800 nodes with an average/maximum node degree of 1045. 

An exact solution would, therefore, be impractical, as existing solution algorithms are slow even for a few hundreds of vertices \cite{Butenko2003}. We propose a local search algorithm based on simulated annealing (SA), a technique that has been shown to perform very well for the maximum clique problem (a similar premise, as it is the opposite of an independent set)\cite{homer1996experiments}. 

Simulated Annealing (SA) was initially proposed as a probabilistic method to solve difficult optimisation problems \cite{hwang1988simulated}. 
It aims to bring a system from an arbitrary initial state to an eventual state of minimum energy. 
Most SAs use an energy measure that is inversely proportional to the quality of the solution and is minimised using an iterative process. 
Starting from a seed solution, SA iterations generate several neighbouring solutions, which are accepted in accordance with a stochastic process. 
The process continues until the "temperature" of the problem reaches a user-defined minimum. 
A high-level structure of our SA algorithm for the MWIS problem is presented in Algorithm \ref{localsearch}.

\begin{algorithm}
\caption{SA for the Independent Set Problem} \label{localsearch}
\begin{algorithmic}
\State Generate initial solution $S_0$ for graph $G$
\State Set initial and minimum temperatures $T_0$, $T_{min}$
\State $S_{old} = S_0$
\State $E_{old} = energy(S_{old}, G)$
\State $S_{best} = S_{old}$
\State $E_{best} = E_{old}$
\State $T \gets T_0$
\While{$T > T_{min}$}
\State $S_{new} = neighbour(S_{old}, G)$
\State $E_{new} = energy(S_{new}, G)$
\If{$E_{new} < E_{best}$}
\State $S_{best} = S_{new}$
\State $E_{best} = E_{new}$
\EndIf
\State $S_{old}, E_{old} = select(S_{old}, S_{new}, E_{old}, E_{new}, T)$
\State $T = \alpha T$, \quad (where $\alpha$ is a constant and $\alpha<1$) 
\EndWhile
\State Output: $S_{best}$, $E_{best}$
\end{algorithmic}
\end{algorithm}

Algorithm \ref{localsearch} utilizes a graph $G$, constructed to identify all possible vehicle-rider-rider combinations by representing them as a set of nodes. Each node in the set is a 3-tuple, $\langle c, w_c, N_c \rangle$. 
$c$ refers to the combination of vehicle-rider-rider in the form of $\langle k,i,j \rangle$, $w_c$ refers to the weight of the node as defined in Section \ref{sec:3.3} and $N_c$ is a list of neighbouring nodes. 
It can be easily shown that the degree of each vertex is equal to $|N_c|$. 

To construct the graph we set $N_c = \emptyset$ and iterate through the network nodes to populate $N_c$ for each vertex. 
As with Algorithm \ref{vertexgen}, this process requires $|K||R|^2$ iterations (fully connected scenario) to create the set of vertices $V$. 
Populating $N_c$ for each vertex (and creating the edge set $E$), requires $|V|^2$ iterations (Algorithm \ref{edgegen}). Since $|V|^2 = (|K||R|^2)^2$, the complexity of the worst case  scenario for network generation is $O(|K|^2|R|^4)$.
This process, however, can be easily parallelised.

\begin{algorithm}
\caption{Vertex Generation Process} \label{vertexgen}
\begin{algorithmic}
\State $V \gets \emptyset$
\For{$k \in K$}
\For{$i \in N_k$}
\For{$j \in I_i$}
\State $w_c = u_{i}(k,i,j) + u_{i}(k,i,j) + u_{k}(k,i,j)$
\If{$w_c \geq 0$}
\State $c = \langle k, i, j \rangle$
\State $N_c = \emptyset$
\State $V \gets V \cup \langle c, w_c, N_c \rangle$
\EndIf
\EndFor
\EndFor
\EndFor
\State Output: $V$
\end{algorithmic}
\end{algorithm}

\begin{algorithm}
\caption{Edge Generation Process} \label{edgegen}
\begin{algorithmic}
\State Non-empty set $V$
\State $E \gets \emptyset$
\For{$i \in V$}
\For{$j \in V\setminus i$}
\If{$c_i \cap  c_j \neq \emptyset$}
\State $N_{c_i} \gets N_{c_i} \cup j$
\State $N_{c_j} \gets N_{c_j}\cup i$
\State $E \gets \langle i,j \rangle$
\EndIf
\EndFor
\EndFor
\State Output: $G=(V,E)$
\end{algorithmic}
\end{algorithm}

A set of greedy heuristics with known lower bound performance \cite{kako2005approximation} is used to obtain an initial solution $S_0$, consisting of an ordered set of vertices in $V$. 
These operate by sorting vertices in a descending order with respect to $w_c$, $1/|N_c|$, $w_c/|N_c|$ and $w_c/\sum_{i \in N_c} w_i$, respectively. 
The best solution among these four is identified through inspection. 

To calculate the energy of a solution (Algorithm \ref{energy}), we iterate through the ordered vertex sequence $S$. 
At each step, we add the next vertex in $S$ to the independent set $I$ and removing its neighbours from $S$.
Iterations continue until $S$ is empty. 
The energy of the solution is, therefore, equal to the negative sum of all values $w_c$, for each vertex within $I$. 

\begin{algorithm}
\caption{Energy Calculation} \label{energy}
\begin{algorithmic}
\State Non-empty ordered sequence $S$
\State Graph $G=(V,E)$
\State $I \gets \emptyset$
\While{$S \neq \emptyset$}
\State $i = S(1)$
\State $I \gets I \cup i$
\State $S \gets S \setminus (S \cap  (N_{c_i} \cup i))$, \quad (obtain $N_{c_i}$ from $G$)
\EndWhile
\State $E = - \sum_{i \in I} w_{c_i}$, \quad (obtain $w_{c_i}$ from $G$)
\State Output: $E$
\end{algorithmic}
\end{algorithm}

When it comes to the generation of neighbouring solutions, we randomly select two vertices in the independent set $I$ of the old solution $S_{old}$ and switch their positions in $S_{old}$ to produce sequence $S_{new}$. This approach increases the chance that sequence $S_{new}$ will produce a different independent set and energy than $S_{old}$. Finally, we form our stochastic selection method on defining an acceptance probability for every new solution, which is calculated using $E_{old}$, $E_{new}$ and temperature $T$ as shown in Algorithm \ref{selection}. Better solutions are always accepted, whereas worse solutions have less chance of being accepted as the iterations progress (i.e. as temperature $T$ decreases).

\begin{algorithm}
\caption{Selection Process} \label{selection}
\begin{algorithmic}
\State Inputs: $S_{old}$, $S_{new}$, $E_{old}$, $E_{new}$, $T$ 
\State $p = X$, \quad (where $X \sim U(0,1)$)
\If{$E_{new} < E_{old}$}
\State $p_a = 1$
\Else
\State $p_a = e^{(E_{old}-E_{new})/T}$
\EndIf
\If{$p_a > p$}
\State $S_{old} = S_{new}$
\State $E_{old} = E_{new}$
\EndIf
\State Outputs: $S_{old}$, $E_{old}$
\end{algorithmic}
\end{algorithm}

\subsection{Trip Price Determination} \label{sec:3.5}
Optimal solutions of the WDP in CDAs produce efficient outcomes which are individually rational. That is, assuming participants in the auction are truthful about their valuations. There is, however, no guarantee that auction participants (bidders) will state their true valuations. \cite{deVries} explains this problem with an example of three bidders. We will extend this example to our CDA, to illustrate how untruthful bids can arise. 

Let us consider a CDA scenario involving three riders (bidders) and one vehicle. 
Let us also assume that from the six possible allocation combinations, the following three yield a positive value for total trade surplus:

\begin{equation}\label{eq:31}
    f_1(\langle1,2\rangle)=10, \quad f_2(\langle1,2\rangle)=8, \quad b_1(\langle1,2\rangle)=10
\end{equation}

\begin{equation}\label{eq:32}
    f_1(\langle2,1\rangle)=7, \quad f_2(\langle2,1\rangle)=9, \quad b_1(\langle2,1\rangle)=11
\end{equation}

\begin{equation} \label{eq:33}
    f_1(\langle1,3\rangle)=5, \quad f_3(\langle1,3\rangle)=10, \quad b_1(\langle1,3\rangle)=12
\end{equation}

In eqs. \eqref{eq:31}-\eqref{eq:33}, $f_r(\langle S\rangle)$ and $b_k(\langle S\rangle)$ represent total valuation and cost for a rider $r$ and a vehicle $k$, respectively, for a trip with a pickup sequence $S$. 
Using \textit{Model 2}, the platform allocates the trip with the only vehicle servicing riders $1$ and $2$ in the sequence $\langle1,2\rangle$ as it is the combination producing the highest trade surplus. 
Note that riders $1$ and $2$, assuming everyone bids truthfully, can report a lower value per time and still win the auction with the same combination. 

The inclusion of additional riders will give rise to more complex bidding strategies.
In the case that riders $1$ and $2$ reduce their bids excessively, they might lose in the auction. 
This characteristic CDA property is known as the threshold problem \cite{bykowsky2000mutually} and refers to the implication of valuation misreporting thresholds for individual participants, which can motivate bidders to employ perverse bidding strategies \cite{porter2003combinatorial}.

Pricing in VCG auctions, where bidders pay the difference of welfare in their absence with the welfare of others when they are included in the auction, is incentive-compatible \cite{deVries}. 
Furthermore, incentive-compatible payments have been derived through the solution of dual relaxed linear problems (LPs) of the WDP \cite{ba2001optimal}. 
Previous studies \cite{Iosifidis2010, Xu2014, Yu2018}, used relaxed dual WDP problems to identify allocation and pricing in double auctions, with Lagrangean multipliers to be considered as prices. It has been shown that optimal dual variables in LP coincide with VCG payments \cite{bikhchandani2001linear}.

However, the use of near-optimal CDA solutions does not preserve incentive compatibility \cite{nisan2007computationally}. 
Negligible variations from the optimal objective can have significant consequences on the payments to be made by bidders \cite{johnson1997equity}. 
As such, an approximate WDP solution would inhibit the use of VCG or dual LP relaxations that would guarantee incentive-compatibility. The NP-hardness of our proposed CDA prohibits the identification of exact WDP solutions in practical implementations, thereby we omit the use of VCG or dual LP relaxations for price determination. 

Instead, we propose a model which resembles a Generalised First Price (GFP) auction for trip pricing. A GFP mechanism is an untruthful auction mechanism, where participants bid for the allocation of a limited amount of slots. Participants pay their bid values in case they are assigned to a slot. Previous research outlined deficiencies in the GFP mechanism by strategically employed shill bidding which destabilizes the auction \cite{EDELMAN2007192}. Subsequent work in \cite{doi:10.1287/moor.2017.0920} attributes these GFP deficiencies to the auction interface and argues that GFP auctions can be robust by allowing expressiveness of the participants using multidimensional bids.

In the conventional GFP, an individual $i$ submits a single bid $f_i$, which is multiplied by $s_1 \geq s_2 \geq ... \geq s_k$, $k$ being the last available slot. The expressive version of GFP dictates that an individual $i$ submits a different bid $f_{ik}$ for each slot $k$ which is multiplied by $s_1 \geq s_2 \geq ... \geq s_k$ accordingly. Our proposed CDA resembles an expressive GFP, as travellers bid for a limited number of vehicle seats (slots) and by submitting a valuation per time $C_r$, they might obtain a different valuation $f(r)$ for each potential vehicle-rider-rider assignment. 

To limit the effect of untruthful bids on the auction outcome, we propose that each rider only submits the valuation per time $C_r$. The platform in turn identifies and privately informs the rider of its maximum reservation price $F_r$, so that if matched, the payment will comprise of a discounted static price for the time of the trip attributing to $P_r$ and an additional variable rate attributing to $C_r \bar\delta_r$, where $\bar\delta_r$ is the wait and detour time saved by choosing the platform, instead of the rest of the market. 

Consequently, the maximum reservation price $F_r$ is derived by the platform using the following generalised cost equation:

\begin{equation} \label{eq:34}
    F_r = p_b + P_r p_t+ C_r (P_r + \delta_w + \delta_d)
\end{equation}

where $p_b$ is the flat fee and $p_t$ is the discounted price per minute for a shared trip, lasting $P_r$  minutes if private, as specified by the platform. $\delta_w$ and $\delta_d$ refer to the guaranteed maximum wait and detour times respectively, which are used in pre-matching by the platform.

By introducing this format, it is straight-forward to deduce by observing equations \eqref{eq:10}, \eqref{eq:13} and \eqref{eq:17} that in the event where bidders submit per time valuations $C_r$ which are very close to zero, our proposed CDA converts to an optimal 3D assignment problem where the sum of detours is minimised, if the following inequality holds for any vehicle $k$ and riders $i$, $j$ prior to the auction:

\begin{equation} \label{eq:35}
    b(k) \leq f(i) + f(j)
\end{equation}

By introducing this condition with equation \eqref{eq:35}, we ensure that the auction always returns an assignment if a pre-matching instance exists as any rider payments in the GFP instance will always cover the vehicle costs. We thereby need to choose the appropriate value for the flat fee $p_b$, such that equation \eqref{eq:35} holds. In doing so, we assume that the total rider payment per vehicle equals its cost. We also assume that vehicle costs $B_k$ are uniform across the fleet (i.e. $B_k=B \quad\forall k \in K$) and extend the functions as per equations \eqref{eq:10} and \eqref{eq:13}:

\begin{equation} \label{eq:36}
    Bd_k = F_i - C_it_i + F_j - C_jt_j
\end{equation}

Using equation \eqref{eq:34}, and by substituting $\delta_w + \delta_d$ with $\delta$, we reach to the following:

\begin{equation} \label{eq:37}
    Bd_k = 2p_b + p_t (P_i + P_j) + C_i (P_i + \delta - t_i) + C_j (P_j + \delta - t_j) 
\end{equation}

In the minimal total bid scenario, both $C_i$ and $C_j$ in equation \eqref{eq:37} would be zero. We also know that $d_k$ is equal to $max(t_i, t_j)$. By setting the total wait and detour time experienced by each rider $r$ as $\delta_r$, we can replace $t_r$ by $P_r + \delta_r$. Therefore, with $C_i$ and $C_j$ set to zero we arrive to the following equation:

\begin{equation} \label{eq:38}
    B max(P_i + \delta_i, P_j + \delta_j) = 2p_b + p_t (P_i + P_j)
\end{equation}

The maximum vehicle cost in equation \eqref{eq:38} for any values of $P_i$ and $P_j$, occurs if $max(P_i + \delta_i, P_j + \delta_j) = max(P_i, P_j) + \delta$, for $\delta$ as introduced above, being the maximum total wait and detour time guarantee by the platform for an individual rider. Assuming the value of $p_b$ is zero and that $p_t$ is set by the platform such that $p_t \geq B$, if $min(P_i, P_j) \geq \delta$ the condition in \eqref{eq:35} always holds. If however $min(P_i, P_j) < \delta$, a flat fee $p_b$ is required to ensure the condition in \eqref{eq:35}. As such, assuming both $P_i, P_j \rightarrow 0$, and  $max(\delta_i, \delta_j)=\delta$, using equation \eqref{eq:38}, the flat fee for our proposed GFP interface should be as follows:

\begin{equation} \label{eq:39}
     p_b = \frac{B \delta}{2}=\frac{B (\delta_w + \delta_d)}{2}
\end{equation}

\section{Discussion} \label{sec:4}
Our methodology was implemented using Python and tested on a workstation with an Intel i7-4790 CPU (3.6GHz) and 8GB RAM. 
Exact solutions were obtained using the Branch and Cut algorithm provided by IBM ILOG Cplex Optimization Studio 12.7.1. 

To test the algorithm, we create a case study network set in Manhattan, NY. The underline road network and travel times were obtained using the OSMnx library \cite{BOEING2017126}. 
To account for congestion, we applied a 20\% penalty to the free-flow speeds in residential and motorway link segments, and 40\% elsewhere. 
Rider origin-destination pairs, as well as vehicle locations, were sampled uniformly in space to create CDA instances. 
Only trips with travel time that is greater than 5 minutes were considered, while $\delta_w$ and $\delta_d$ were both set to 10 and 15 minutes respectively.

For this study, we used UK-based estimates of working time valuations \cite{DfT2018} for the derivation of rider valuations. Vehicles were assumed to have a capacity of two customers, with their operating costs $B_k$ uniformly set to 12.96 GBP/hour. Conversely, customer time valuations $C_r$ were sampled from a log-normal distribution with a mean of 17.69 GBP/hour and $\sigma = 0.02$. The discounted price per minute $p_t$ was set to 0.75 GBP/min.

Table \ref{tab:comparison table} provides a performance comparison of the Simulated Annealing (SA) and the Branch and Cut (BC) algorithms for a range of instances. As can be seen in the table and in figure \ref{fig:2}, the runtime for the BC approach grows exponentially as more vehicles and riders are considered in the instance, thereby increasing the node count of the MWIS instance, whereas the runtime for the SA remains relatively short. 

\begin{table}[]
\caption{Performance comparison of SA and BC.}
\label{tab:comparison table}
\begin{tabular}{lllllll}
\hline
\begin{tabular}[c]{@{}l@{}}Total\\ vehicles\end{tabular} &
  \begin{tabular}[c]{@{}l@{}}Total\\ riders\end{tabular} &
  \begin{tabular}[c]{@{}l@{}}BC\\ solution\end{tabular} &
  \begin{tabular}[c]{@{}l@{}}BC\\ runtime\\ {[}sec{]}\end{tabular} &
  \begin{tabular}[c]{@{}l@{}}SA\\ solution\end{tabular} &
  \begin{tabular}[c]{@{}l@{}}SA\\ runtime\\ {[}sec{]}\end{tabular} &
  \begin{tabular}[c]{@{}l@{}}Error\\ {[}\%{]}\end{tabular} \\ \hline
4  & 8  & 48.199  & 0       & 48.199  & 0.002 & 0.00 \\
8  & 8  & 77.714  & 0.13    & 77.714  & 0.008 & 0.00 \\
5  & 10 & 120.523 & 0.02    & 120.523 & 0.004 & 0.00 \\
10 & 10 & 127.936 & 0.13    & 127.936 & 0.027 & 0.00 \\
5  & 12 & 108.327 & 0.03    & 108.327 & 0.004 & 0.00 \\
6  & 12 & 136.996 & 0.05    & 136.996 & 0.007 & 0.00 \\
12 & 12 & 145.248 & 0.11    & 144.866 & 0.026 & 0.26 \\
6  & 14 & 122.487 & 0.05    & 122.487 & 0.021 & 0.00 \\
7  & 14 & 131.537 & 0.28    & 128.81  & 0.029 & 2.07 \\
7  & 15 & 208.895 & 0.5     & 208.895 & 0.034 & 0.00 \\
8  & 16 & 171.665 & 0.27    & 170.613 & 0.053 & 0.61 \\
7  & 17 & 160.367 & 0.06    & 160.253 & 0.024 & 0.07 \\
9  & 18 & 236.672 & 1.66    & 235.918 & 0.076 & 0.32 \\
8  & 20 & 204.96  & 0.28    & 204.93  & 0.061 & 0.01 \\
10 & 20 & 236.125 & 1.45    & 234.785 & 0.149 & 0.57 \\
11 & 22 & 295.669 & 3.45    & 291.632 & 0.209 & 1.37 \\
12 & 24 & 331.237 & 29.95   & 321.797 & 0.598 & 2.85 \\
10 & 25 & 326.057 & 39.86   & 321.967 & 0.64  & 1.25 \\
14 & 28 & 377.018 & 103.5   & 373.223 & 2.702 & 1.01 \\
15 & 30 & 397.518 & 130.92  & 387.628 & 2.374 & 2.49 \\
15 & 20 & 285.793 & 18.14   & 283.101 & 0.407 & 0.94 \\
20 & 20 & 270.432 & 58.24   & 267.002 & 0.746 & 1.27 \\
12 & 25 & 324.687 & 26.64   & 321.524 & 1.239 & 0.97 \\
15 & 25 & 323.191 & 42.3    & 316.062 & 1.399 & 2.21 \\
16 & 25 & 341.845 & 82.19   & 333.836 & 1.496 & 2.34 \\
17 & 25 & 330.884 & 124.78  & 321.94  & 1.573 & 2.70 \\
14 & 30 & 419.626 & 49.59   & 415.906 & 2.193 & 0.89 \\
13 & 26 & 329.992 & 63.22   & 324.955 & 2.587 & 1.53 \\
16 & 32 & 431.772 & 1003.55 & 408.41  & 3.423 & 5.41 \\
17 & 34 & 469.94  & 4126.59 & 445.309 & 4.456 & 5.24 \\ \hline
\end{tabular}
\end{table}

\begin{figure}[]
\centering
\includegraphics[width=0.48\textwidth]{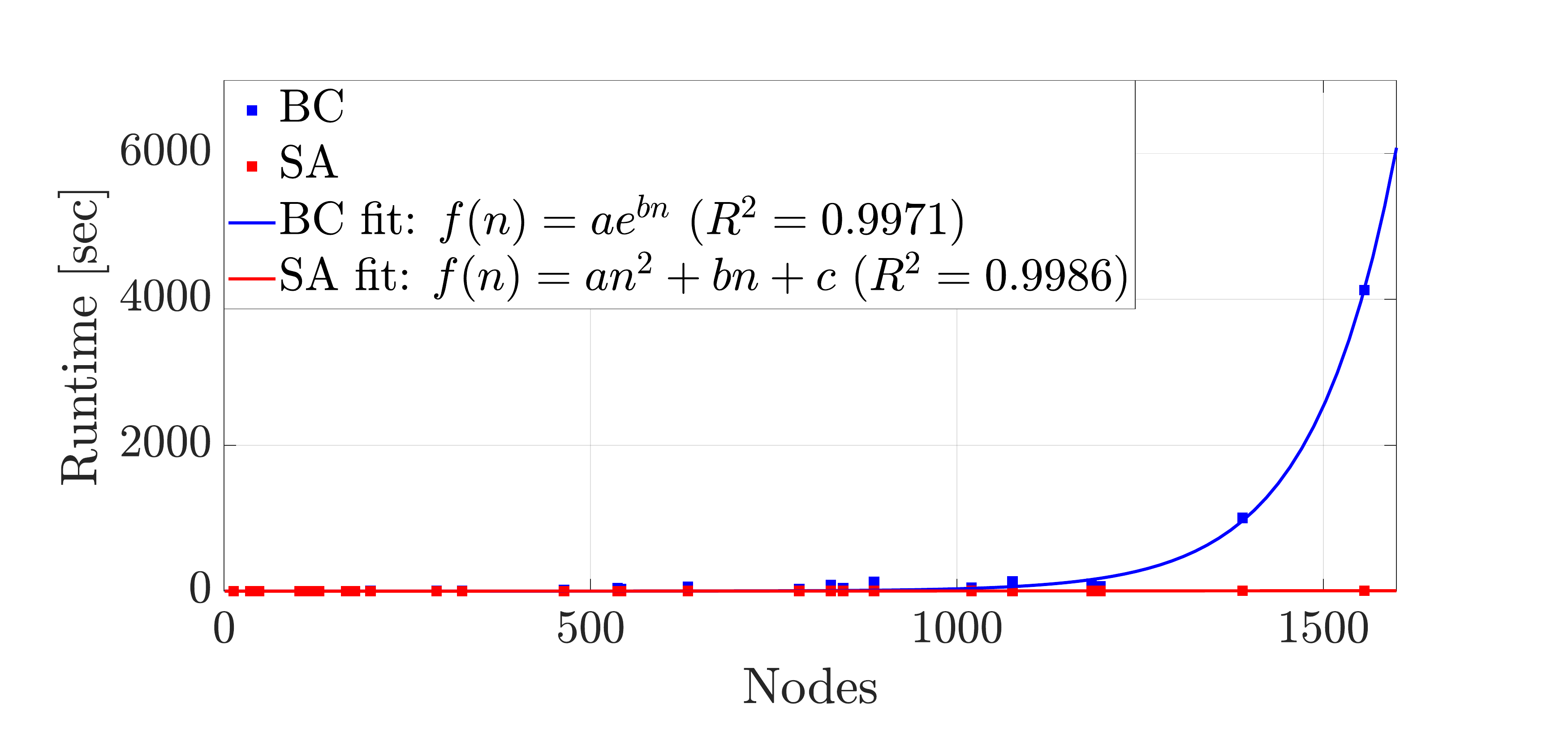}
\caption{Run-time for BC and SA methods.}
\label{fig:2}
\end{figure}

The APX-complete nature of our problem is also signified in the solution comparison between BC and SA as observed in figure \ref{fig:3}, as the percentage error gradually increases with a larger instance size. However, as shown in figure \ref{fig:4}, the approximation error is relatively low for instances of such size. A visual comparison of the results obtained by the BC and SA algorithms is provided in figures \ref{fig:5a} and \ref{fig:5b}, respectively, for an instance involving 10 vehicles, 20 customers and two edges per match outlines the similarities between solutions obtained using the two approaches.

\begin{figure}[]
\centering
\includegraphics[width=0.48\textwidth]{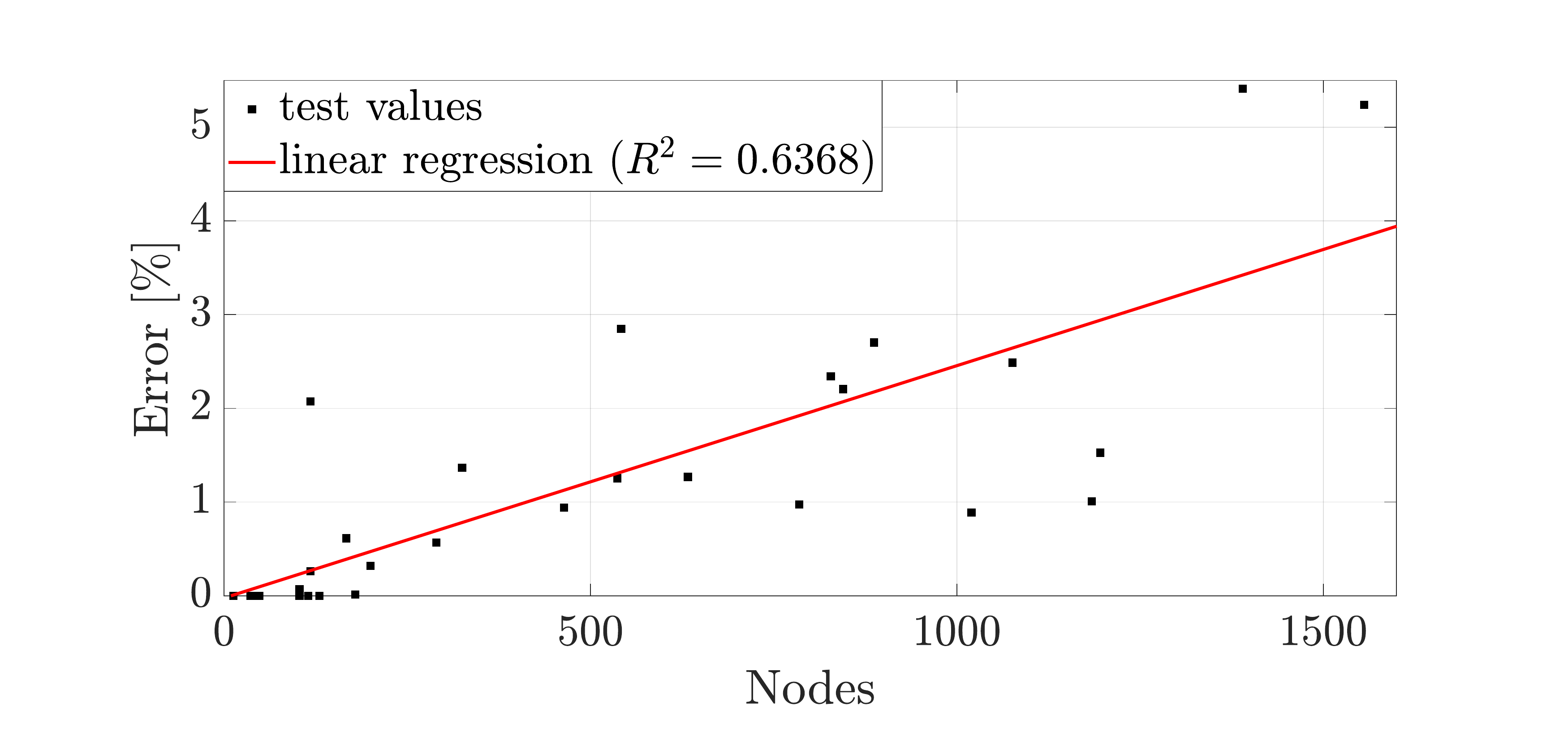}
\caption{Percentage error of approximation against node count.}
\label{fig:3}
\end{figure}

\begin{figure}[]
\centering
\includegraphics[width=0.48\textwidth]{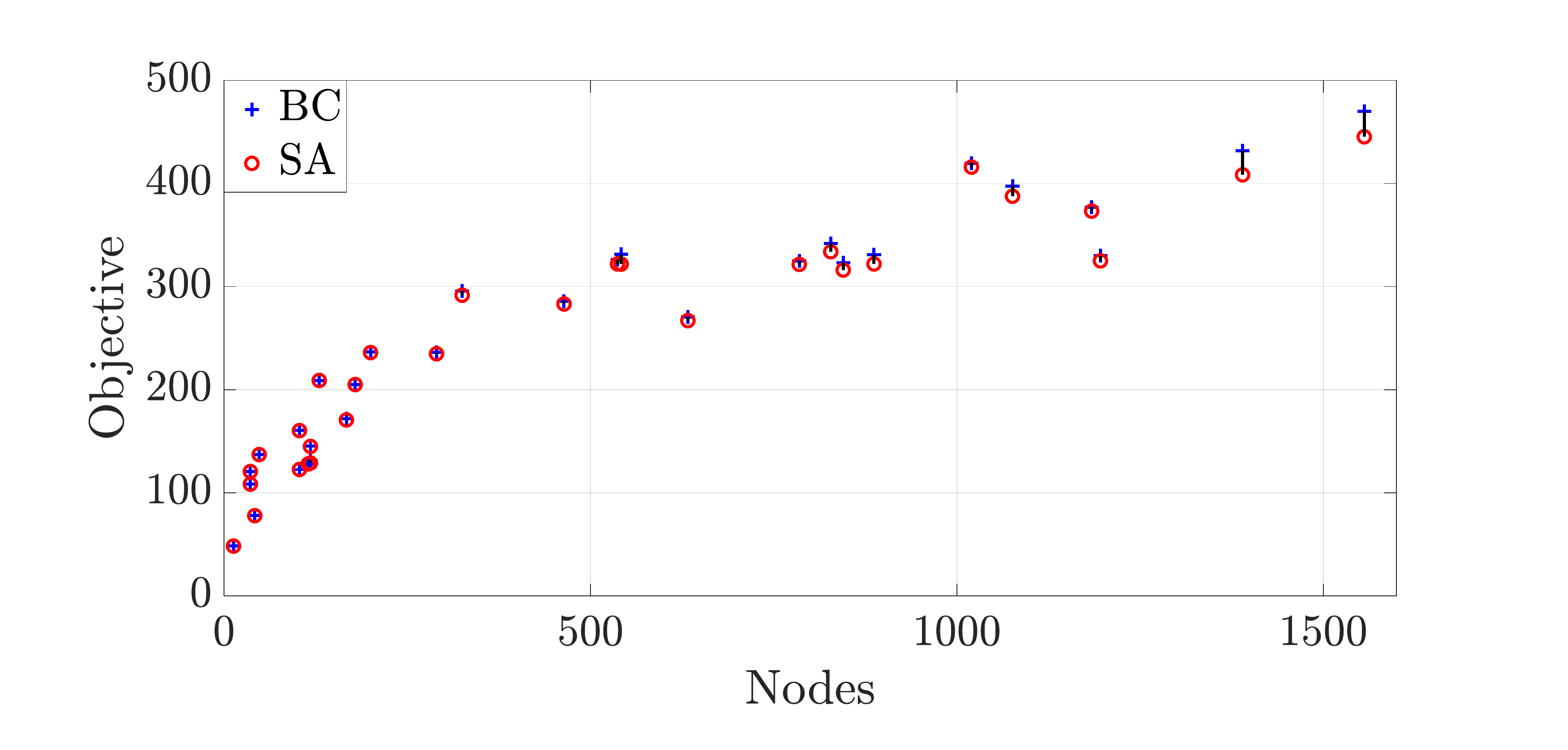}
\caption{Solution values for BC and SA methods against node count.}
\label{fig:4}
\end{figure}

\begin{figure}[H]
\centering
\mbox{
\subfloat[]{\includegraphics[width=1.53in]{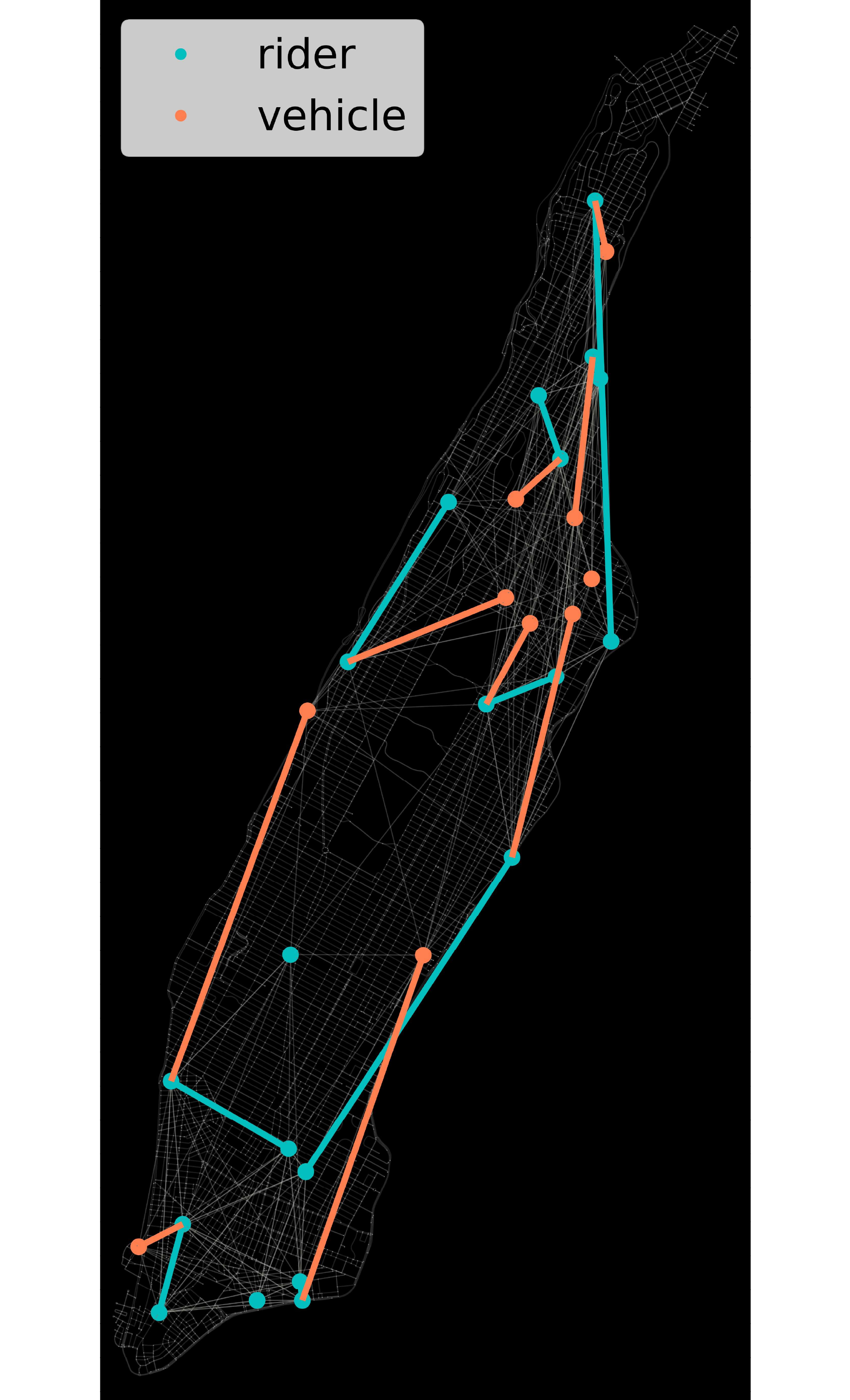}\label{fig:5a}}
\subfloat[]{\includegraphics[width=1.53in]{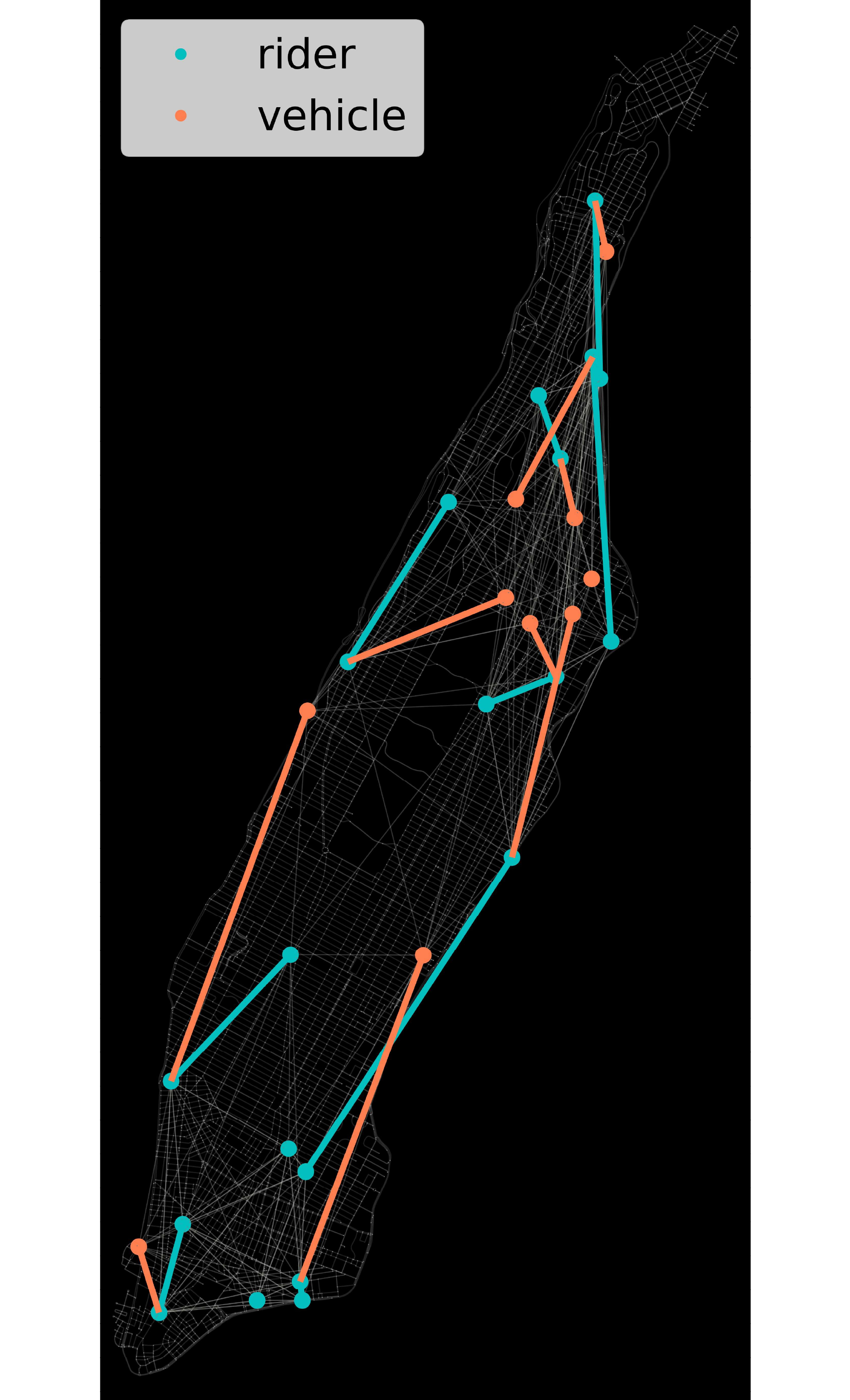} \label{fig:5b}}
}

\caption{Visualisation of (a) BC and (b) SA solutions} \label{fig:5}
\end{figure}

To strengthen the argument for the inclusion detour calculations on CDAs for ride-sharing, we conducted a comparison analysis between exact solutions of our \textit{Model 2} and the algorithm in the state-of-the-art which mostly resembles our problem statement, namely the CDA model in \cite{Yu2018}. We created instances from 10 to 22 requests, with the assumption of one seat per request. For each instance, we assumed there are just enough vehicles to cover the demand (i.e. half the number of requests). To run the CDA in \cite{Yu2018}, we converted the distance-based methodology to time-based to match \textit{Model 2} and omitted private rides. We used a vehicle capacity of two rides for all vehicles in both models.

As observed in Figure \ref{fig:6_0}, since the CDA in \cite{Yu2018} omits detours and wait times in their calculation, the resulting assignment creates much higher detour and wait times on average for each instance. Consequently, ignoring the effect of detours and wait times in ride-sharing CDAs can produce assignments which might not be acceptable by the users of the service. Taking the time dimension into account we can indeed massively improve the convenience of the service as observed. However, we achieve this with an increase in computational complexity, as discussed in \ref{sec:3}. Nonetheless, a reduction of the solution space can be achieved via the pre-matching stage, as shown in \ref{sec:3.1}.

\begin{figure}[]
\centering
\includegraphics[width=0.48\textwidth]{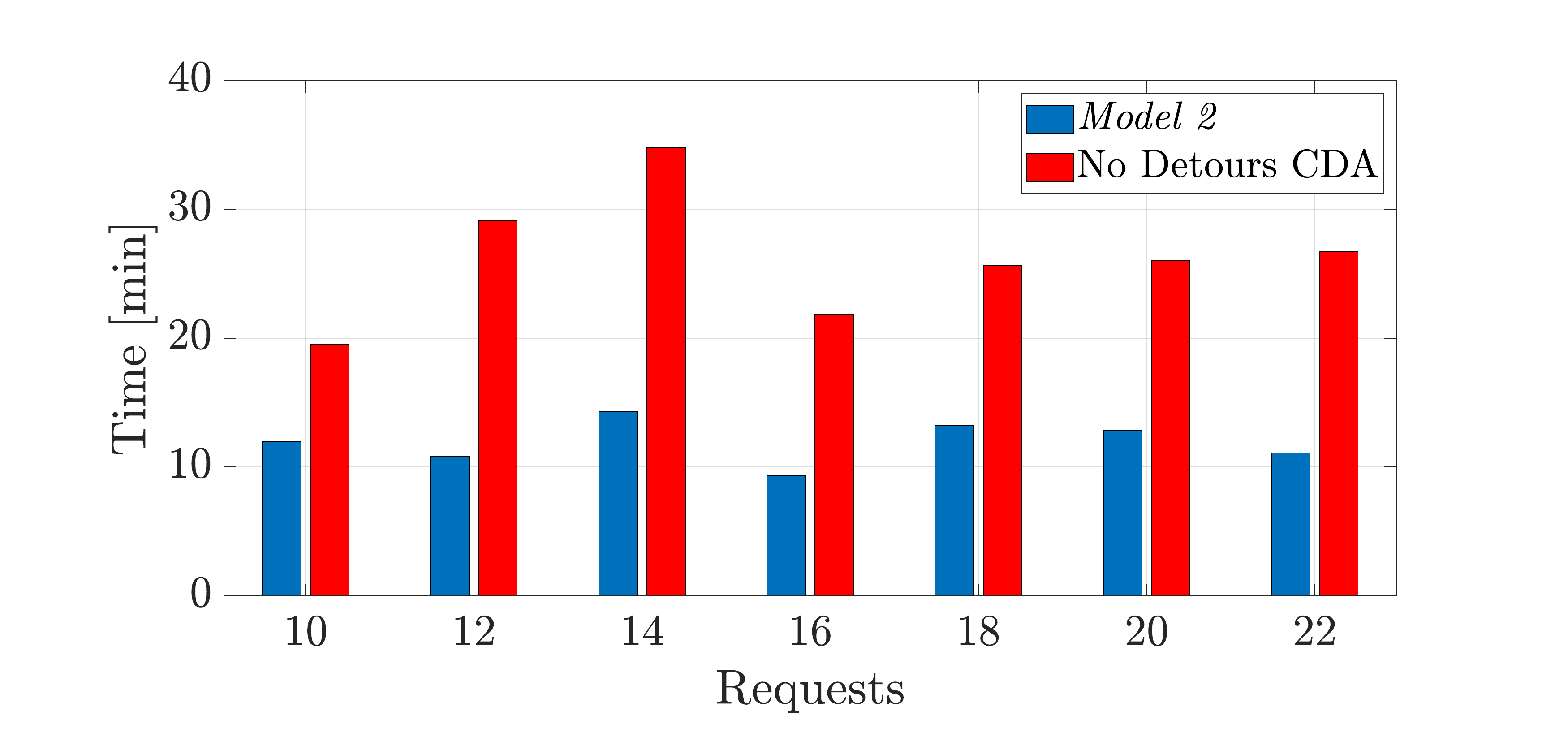}
\caption{Average total wait and detour time per request for \textit{Model 2} and CDA in \cite{Yu2018}.}
\label{fig:6_0}
\end{figure}

\subsection{Trade surplus implications}
A large number of problem instances were considered, with fleet sizes ranging between 3 and 60 vehicles, and a customer base of 10 to 60 riders. From the range of greedy heuristics that were considered for SA initialisation described in section \ref{sec:3.4}, weight-based approaches were found to yield the best results (Figure \ref{fig:6}). Figures \ref{fig:7} and \ref{fig:8} illustrate the relationship between problem sizes and algorithm run times, which is found to be in polynomial time.

\begin{figure}[h]
\centering
\includegraphics[width=0.48\textwidth]{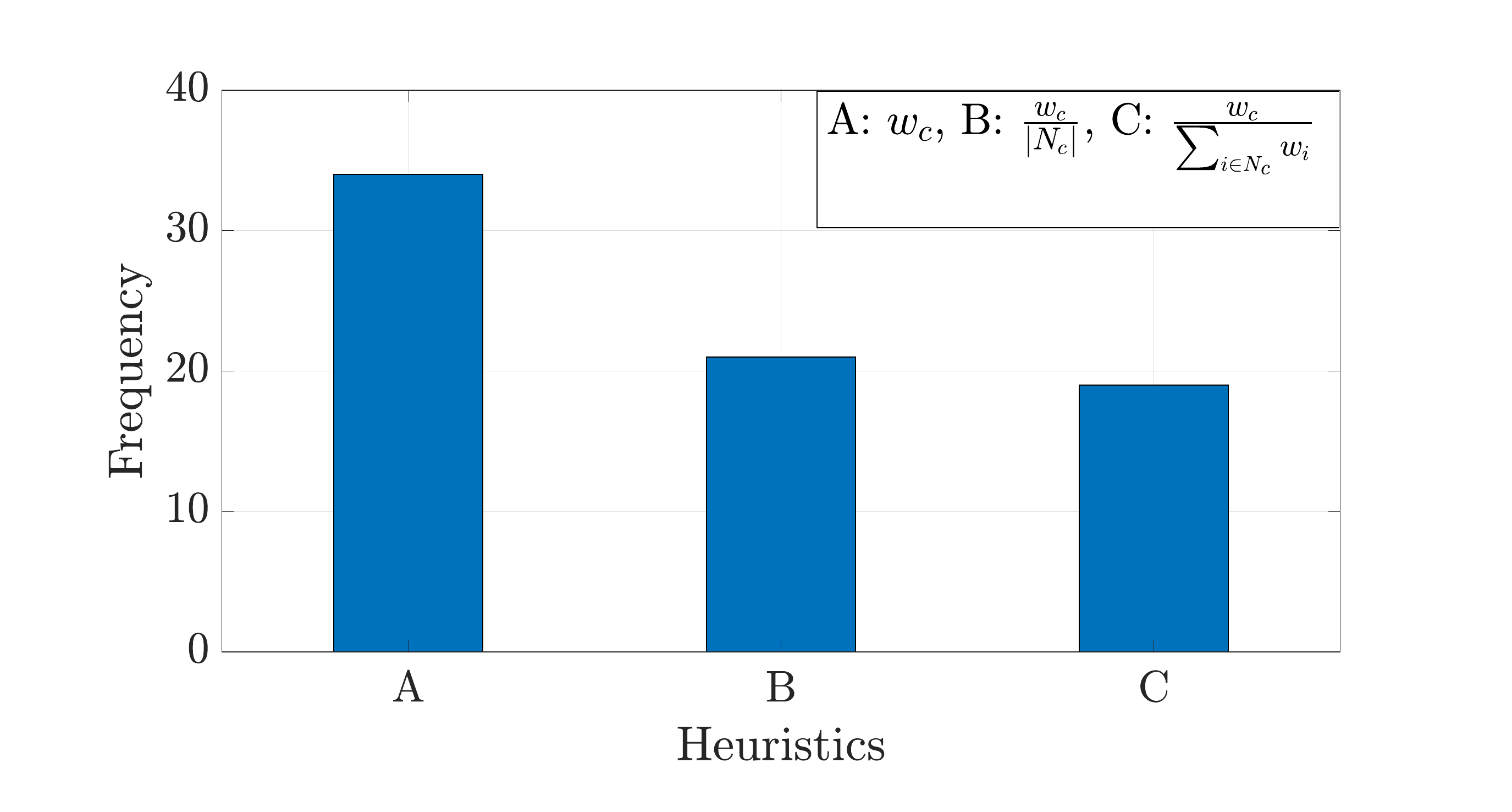}
\caption{Performance comparison of greedy initialisers}
\label{fig:6}
\end{figure}

\begin{figure}[h]
\centering
\includegraphics[width=0.48\textwidth]{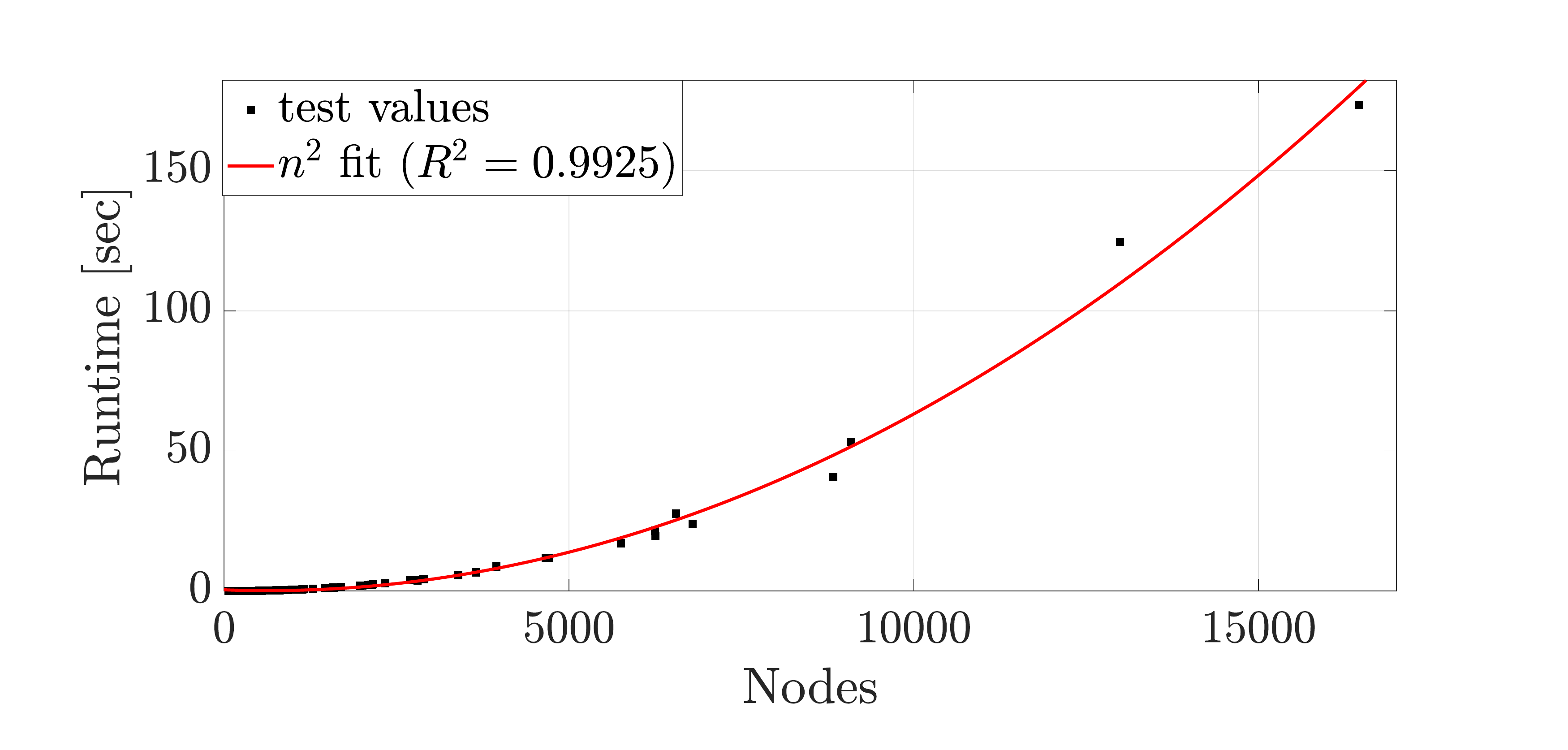}
\caption{Network creation time against node count.}
\label{fig:7}
\end{figure}

\begin{figure}[h]
\centering
\includegraphics[width=0.48\textwidth]{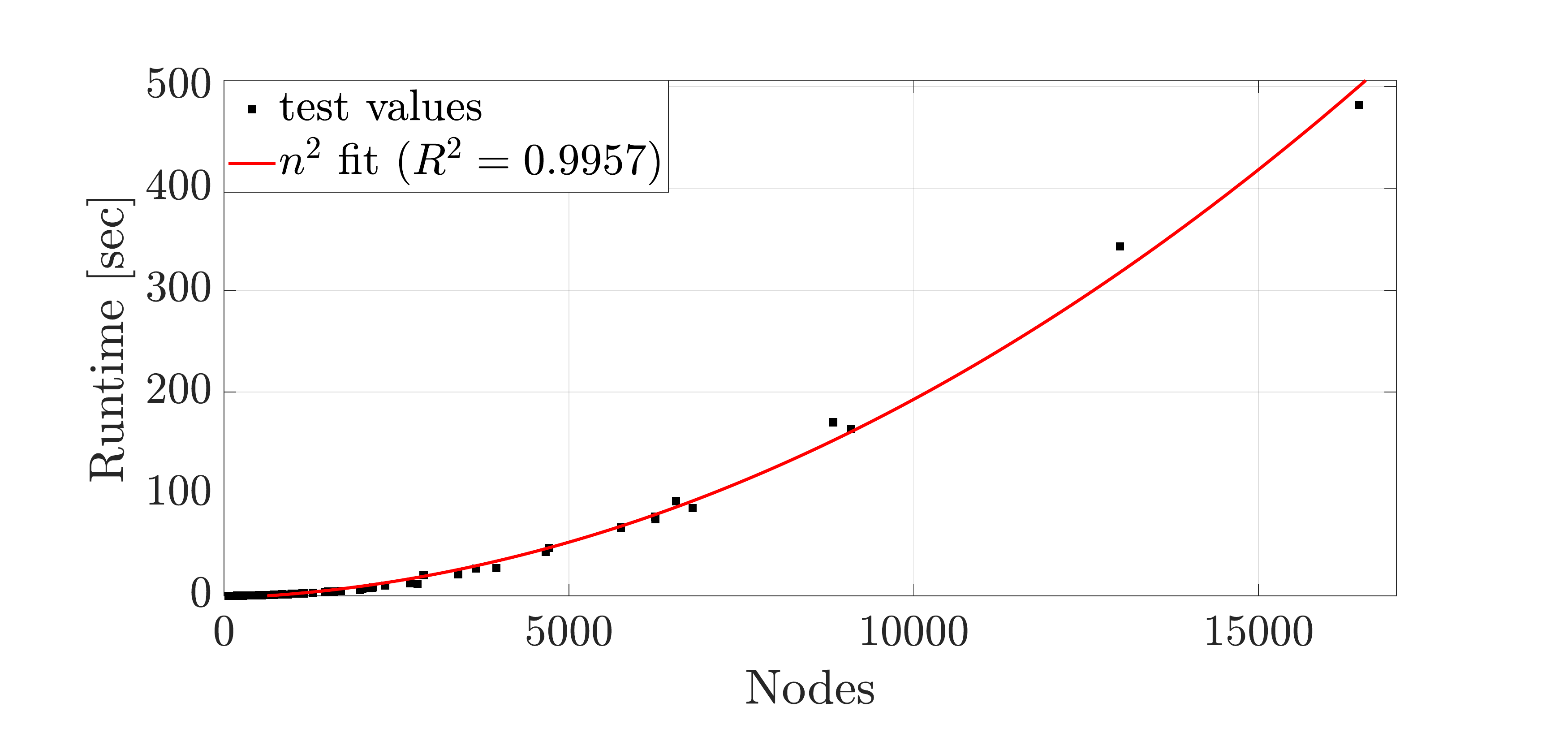}
\caption{SA runtime against node count.}
\label{fig:8}
\end{figure}

We define the trade surplus index (TSI) as the ratio of the objective value and the number of assigned vehicles in each instance. 
Figures \ref{fig:9} and \ref{fig:10} illustrate its relationship with the fleet coverage index (FCI), defined as the ratio of vehicles available against the number of vehicles required to serve all requests. An interesting feature of our approach (as shown in Figures \ref{fig:9} and \ref{fig:10}) is that the TSI is inversely proportional to the FCI for values of the latter between 0 and 1, and remains constant beyond that point.

This pattern can be explained by considering a scenario with 1 vehicle and 10 riders.
In this case, the node with the highest weight will be the solution in the MWIS problem. 
The addition of a new vehicle (with the same cost), assuming that it is included in the MWIS solution, will lead to a reduction in the average node weight.
This trend will persist with further increases in the size of the fleet, as riders with lower valuations are accommodated and gradually reduce the overall TSI. 
As such, once $FCI>1$ the TSI will on average remain constant, consistent with the notion of market equilibrium while supply increases beyond current demand levels.

\begin{figure}[h]
\centering
\includegraphics[width=0.48\textwidth]{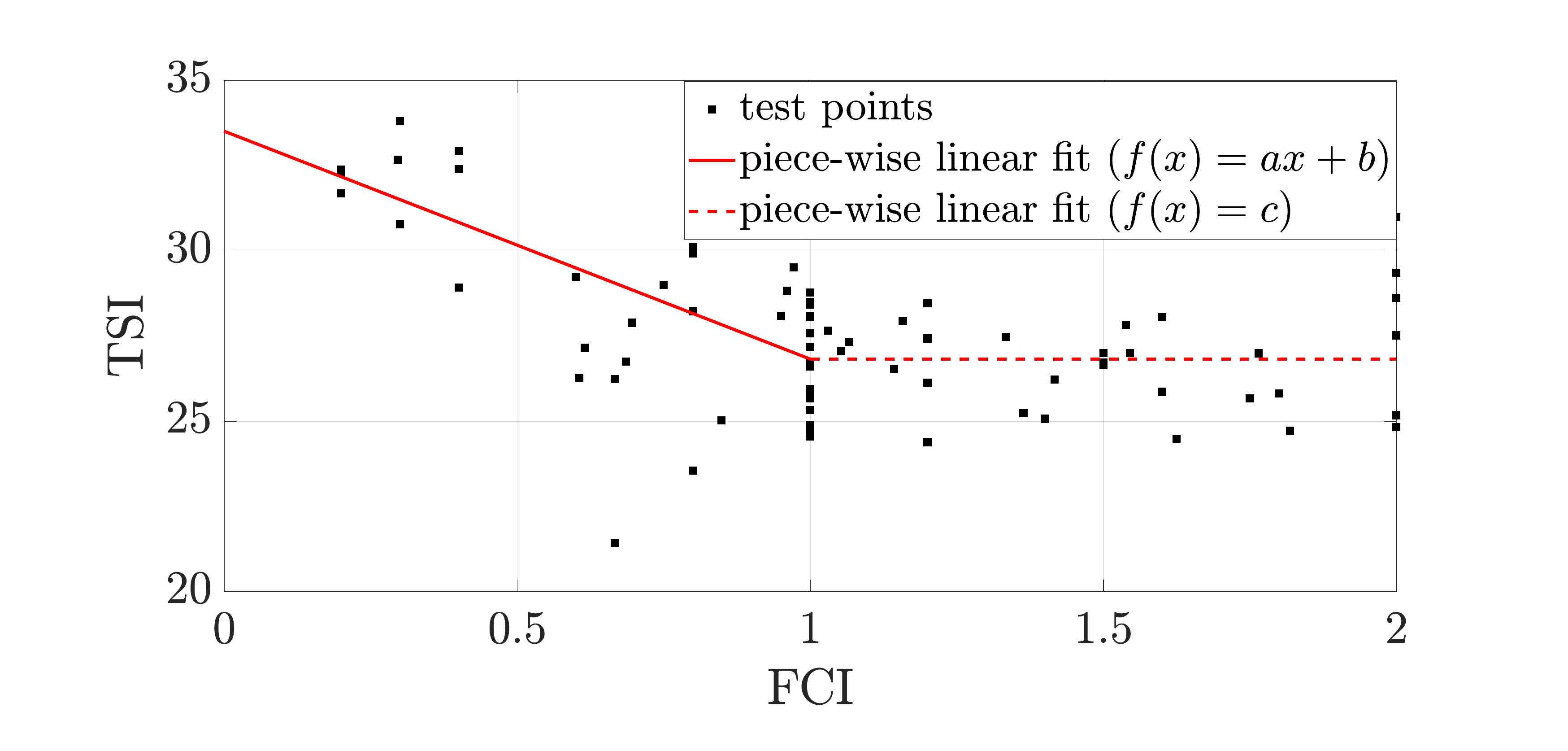}
\caption{Trade surplus per serving vehicle and fleet coverage.}
\label{fig:9}
\end{figure}

\begin{figure}[h]
\centering
\includegraphics[width=0.48\textwidth]{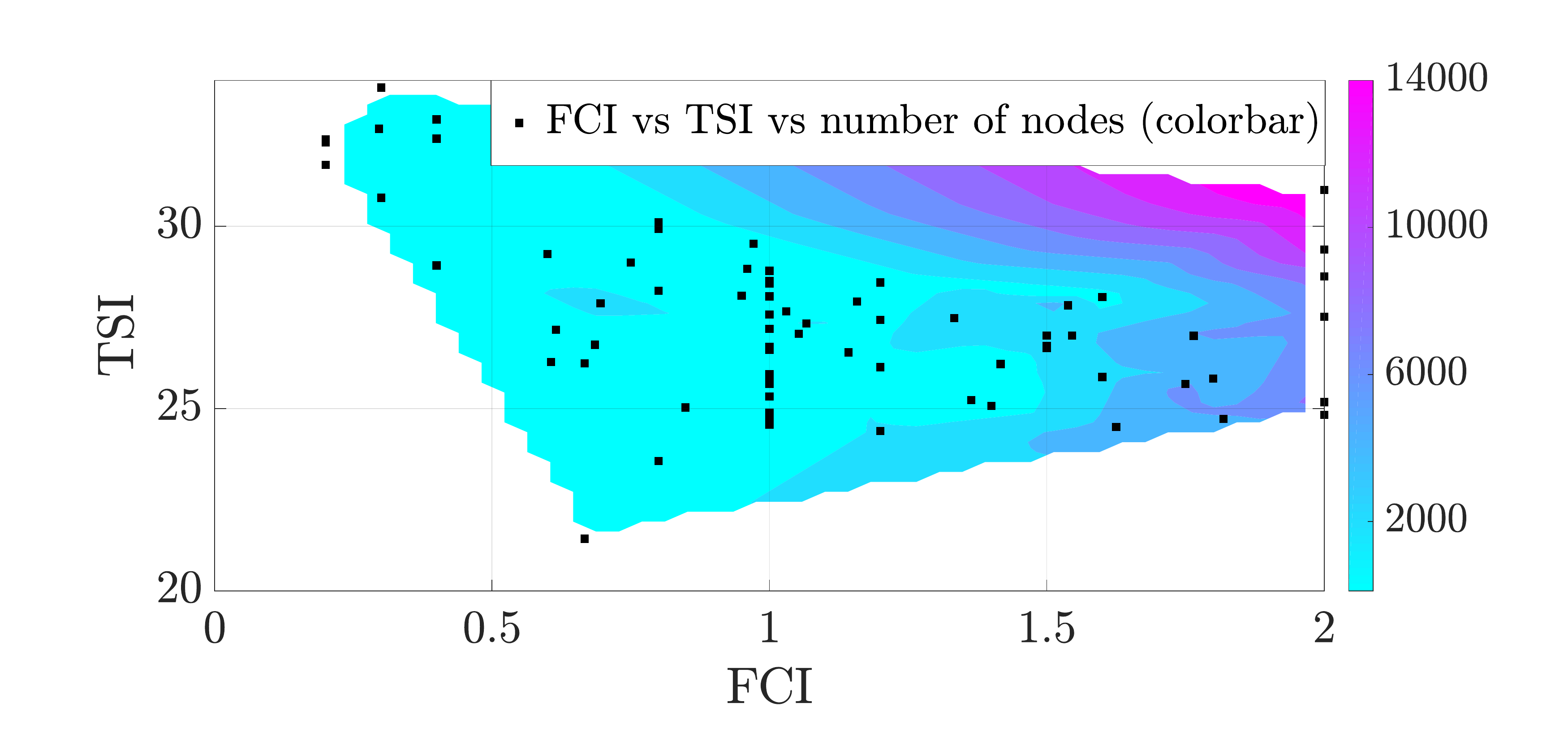}
\caption{Trade surplus per serving vehicle fleet coverage and node numbers}
\label{fig:10}
\end{figure}




\subsection{Practical Implementation} \label{sec:4.2}
To investigate the practical implementation of our proposed methodology, we analysed ride-sharing data provided by the Taxi and Limousine Commission (TLC) of New York City (NYC). Specifically, we exported the high volume for-hire vehicle trip records provided in \cite{TLC2019} and identified typical daily weekday trip count profiles which originate and terminate in the island of Manhattan NYC for the entirety of the ride-sharing market. 

By recording the MWIS node count for a varying request input, we were able to grasp the effect of ride requests on the problem size for an FCI equal to one (supply=demand), as shown in figure \ref{fig:11}. Using the identified runtime trends outlined in figures \ref{fig:7}, \ref{fig:8} and \ref{fig:11}, we compiled table \ref{tab:RuntimeTable}, which reports the time performance of our proposed methodology for varying request inputs. By assessing the request performance levels in table \ref{tab:RuntimeTable} we chose fifty requests as the practical limit in Manhattan, since our proposed methodology produces ride-sharing solutions approximately within one minute, which we regarded as acceptable. 

\begin{figure}[h]
\centering
\includegraphics[width=0.48\textwidth]{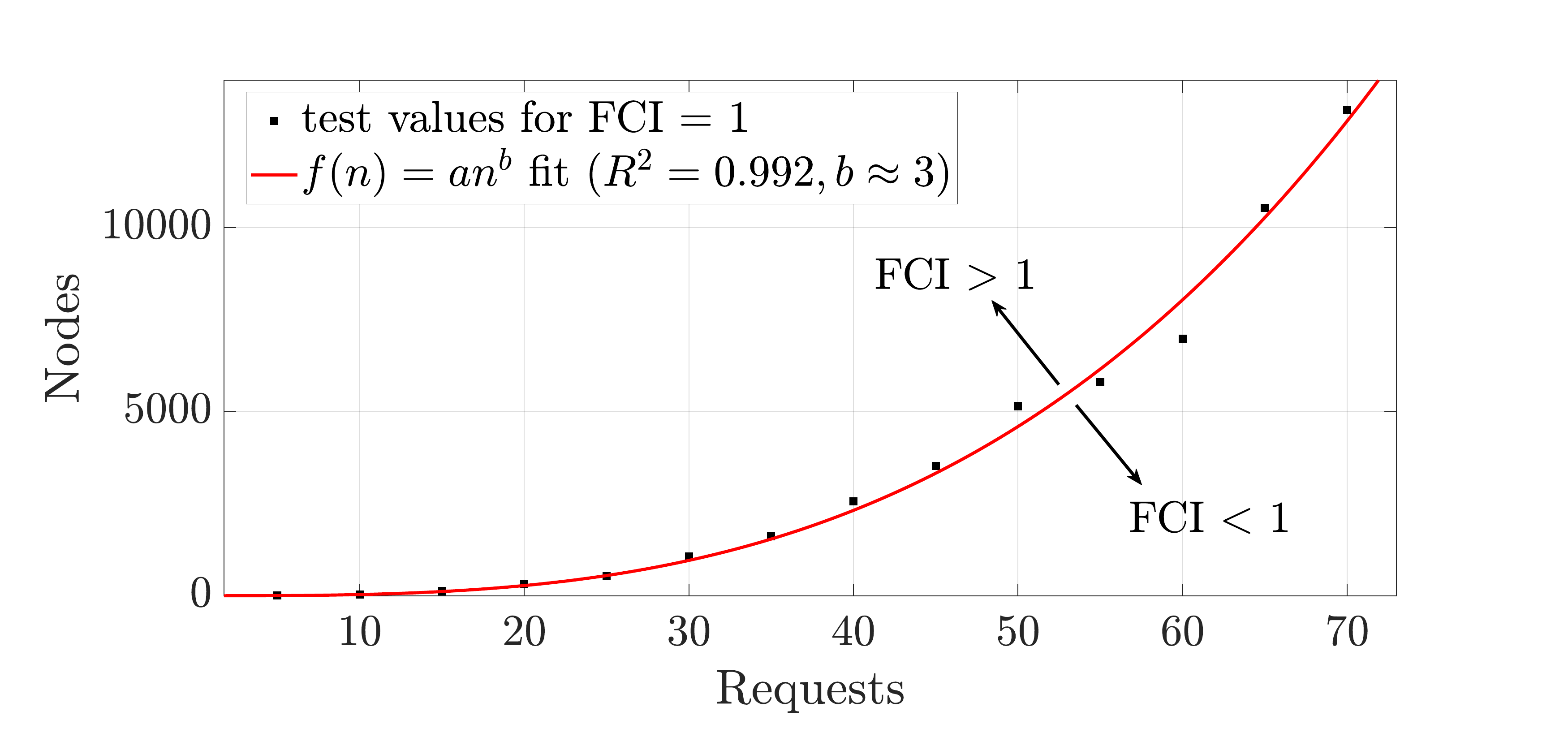}
\caption{MWIS node count for a varying request input in Manhattan, NYC.}
\label{fig:11}
\end{figure}

\begin{table}[]
\centering
\caption{MWIS instance node count and runtimes against varying request inputs.}
\label{tab:RuntimeTable}
\begin{tabular}{lllll}
\hline
Requests &
  Nodes &
  \begin{tabular}[c]{@{}l@{}}Network\\ runtime\\ {[}sec{]}\end{tabular} &
  \begin{tabular}[c]{@{}l@{}}SA\\ runtime\\ {[}sec{]}\end{tabular} &
  \begin{tabular}[c]{@{}l@{}}Total\\ runtime\\ {[}sec{]}\end{tabular} \\ \hline
40 & 2500 & 5  & 10  & 15  \\
45 & 3500 & 10 & 20  & 30  \\
50 & 5000 & 15 & 50  & 65  \\
55 & 6000 & 25 & 80  & 105 \\
60 & 8000 & 45 & 120 & 165 \\ \hline
\end{tabular}
\end{table}

By examining the typical per-minute shared ride count in Manhattan in figure \ref{fig:12}, we observe that the demand surpasses the cutoff of fifty requests only during three distinct demand peaks, specifically during the morning, afternoon and evening. As such, since the highest peak narrowly exceeds a hundred shared trip rides, our proposed methodology can be practically implemented during peak hours when demand for rides exceeds supply (FCI $\leq$ 1) with an assignment duration interval $\Delta$ of thirty seconds for the entire Manhattan shared ride market. 

\begin{figure}[h]
\centering
\includegraphics[width=0.48\textwidth]{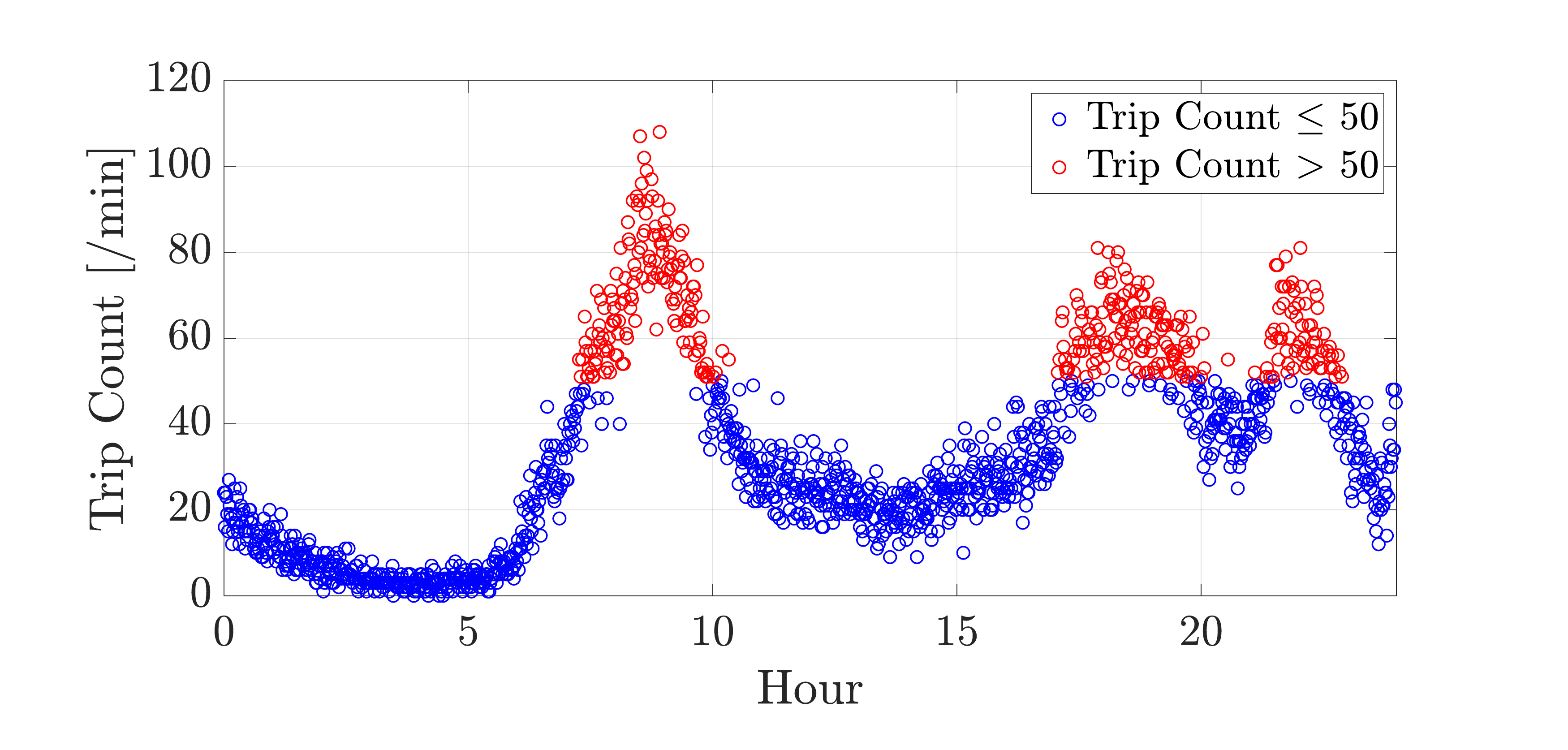}
\caption{Typical Weekday Per-Minute Shared-Trip Count in Manhattan, NYC.}
\label{fig:12}
\end{figure}

Nonetheless, the choice of a practical request cutoff value also depends on the error of the SA solution when compared to the exact solution. As observed in figure \ref{fig:11}, the number of nodes increases almost in a cubic rate with an increasing number of requests. Also, the percentage error increases approximately in a linear fashion with an increasing number of nodes, as observed in figure \ref{fig:3}. As such, an instance of 100 requests might have a comparable SA total utility value when split in three instead of two instances of 50 requests. For reference, by running BC instances of 50 requests for FCI$=1$, using the upper bound\footnote{An exact solution for such an instance size was prohibitive due to combinatorial explosion. As such, we used a long-run upper bound and best integer solution of the BC algorithm before termination.} and the best integer solution provided, we pinpoint the SA percentage error within $10\%-20\%$ of the exact solution. 

Our practical implementation recommendations above assume pervasiveness of autonomous vehicles similar to the levels of current conventional ride-sharing platforms. Nonetheless, the adoption rate of autonomous vehicles in commercial ride-sharing is a conjecture. As such, plausible scenarios could involve mixed fleets and custom rider requirements. Even so, our algorithm is still applicable for such customisation as one could screen any preferences in the pre-matching stage (Section \ref{sec:3.1}).

\section{Conclusion} \label{sec:5}

In this paper, we considered the problem of ride-sharing assignment and pricing in TNC platforms with autonomous vehicles. 
Our proposed assignment and pricing approach utilises a local search algorithm that solves a WDP MILP variant approximately in polynomial time by computing three-dimensional assignments to maximise trade surplus. 
By investigating the robustness of our proposed model, we derived a GFP auction interface which conveniently reduces to a stable three-dimensional assignment with minimal detours if riders report untruthful bids. We demonstrated the practicability of our proposed assignment and pricing method in a large urban setting such as Manhattan, NYC.

Our suggestions for future research in this area are twofold. 
First, we believe that both computational complexity and accuracy improvements are possible in exploring the breadth of meta-heuristics and machine learning algorithms in solving the proposed problem of ride-sharing auctions. 
Spatial clustering of requests, for example, could split much larger instances than the ones tested into parallel problems, which could be solved in a reasonable time, without compromising much of the efficiency of the algorithm. Secondly, agent-based modelling studies which focus on analysing heterogeneous bid behaviours in our proposed methodology (or a variant of it) could be useful. Such studies could produce large data-sets of solutions and aid in better assessing the effects of shill bidding in such a combinatorial auction setting. 

\bibliographystyle{IEEEtran}
\bibliography{references_file}

\begin{IEEEbiography}[{\includegraphics[width=1in,height=1.25in,clip,keepaspectratio]{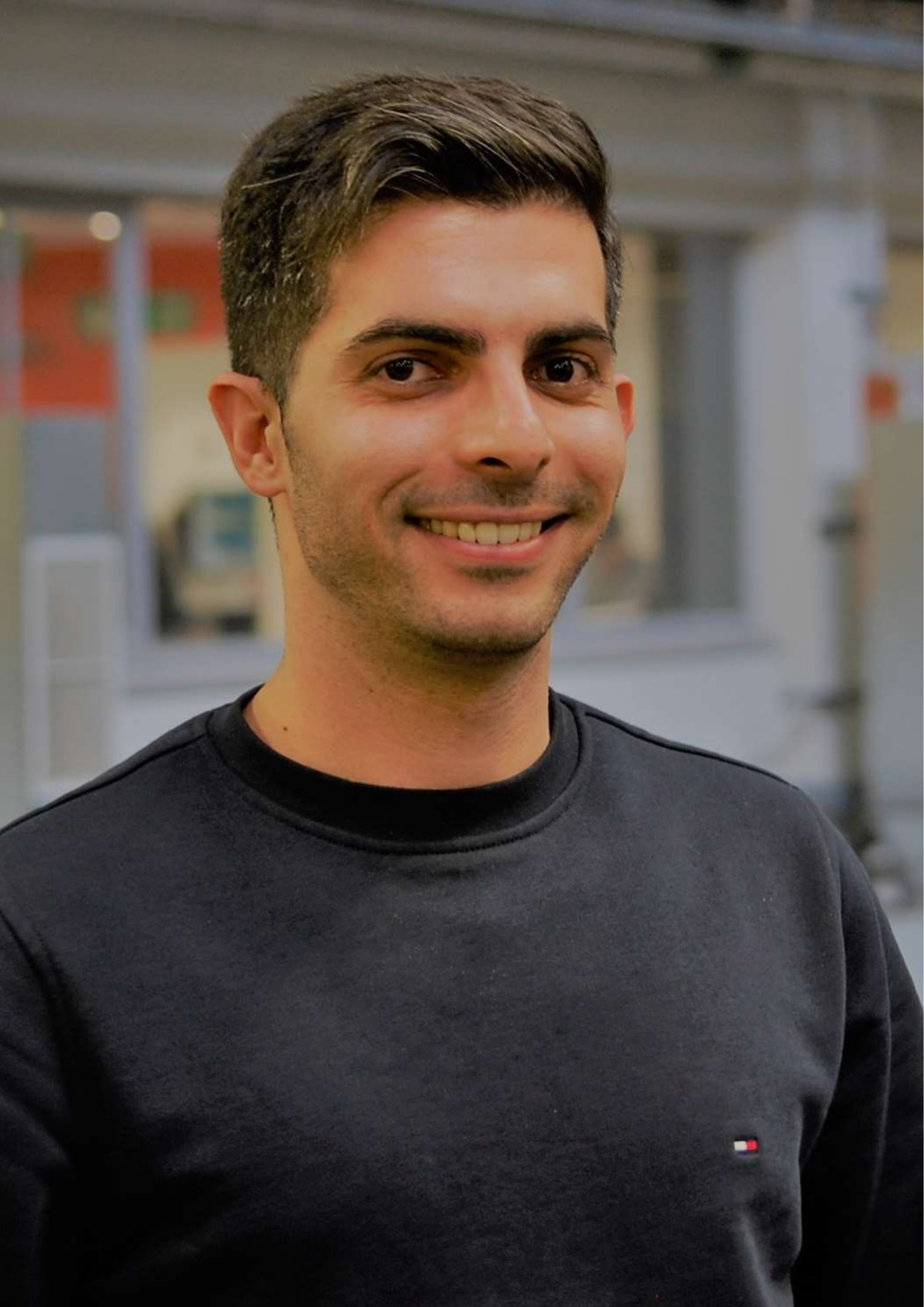}}]{Renos Karamanis}
received an M.Eng. degree in Civil and Environmental Engineering from Imperial College London in 2014. From 2014 to 2015 he worked as an engineering consultant for Mott MacDonald, a multi-disciplinary consultancy with headquarters in the United Kingdom. 

In 2015 he joined the Centre for Transport Studies in Imperial College London where he is currently a Ph.D. student. His research interests include developing operational research methods to improve the efficiency of pricing, assignment and resource allocation of autonomous ride-sourcing fleets, and simulation modelling of ride-sourcing operations to aid policy suggestions.
\end{IEEEbiography}

\begin{IEEEbiography}[{\includegraphics[width=1in,height=1.25in,clip,keepaspectratio]{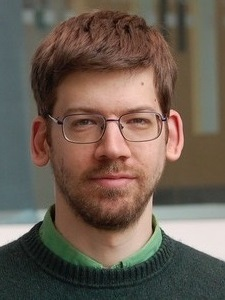}}]{Eleftherios Anastasiadis}
received a BSc in informatics and telecommunications from the University of Athens in 2011. He received an MSC and a PhD in theoretical computer science from the University of Liverpool in 2012 and 2017 respectively. 

In 2017 he worked as quality assurance engineer in Rosslyn Data Technologies Ltd. Since 2018 he is a postdoctoral researcher at the Transport Systems and Logistics Laboratory in the department of Civil and Environmental Engineering at ImperialCollege London. His research interests include approximation algorithms and mechanism design for network optimisation problems, and agent-based simulation for autonomous vehicle fleets. 
\end{IEEEbiography}

\begin{IEEEbiography}[{\includegraphics[width=1in,height=1.25in,clip,keepaspectratio]{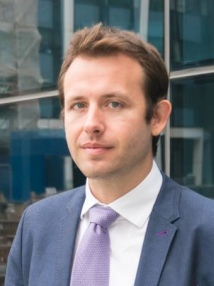}}]{Panagiotis Angeloudis}
is a Senior Lecturer and Director of the Transport Systems and Logistics Laboratory, part of the Centre for Transport Studies and the Department of Civil \& Environmental Engineering at Imperial College London. He received an MEng in Civil \& Environmental Engineering in 2005 and a PhD in Transport Operations in 2009, both from  Imperial College London. 

His research interests lie on the field of transport systems and networks operations, with a focus on the the efficient and reliable movement of people and goods across land, sea and water. He was recently appointed by the UK Department for Transport to the Expert Panel for Maritime 2050 and was a member of the UK Government Office of Science Future of Mobility review team. He is affiliated with the Centre for Systems Engineering and Innovation, the Institute for Security Science and Technology, the Grantham Institute and the Imperial Robotics Forum.
\end{IEEEbiography}

\begin{IEEEbiography}[{\includegraphics[width=1in,height=1.25in,clip,keepaspectratio]{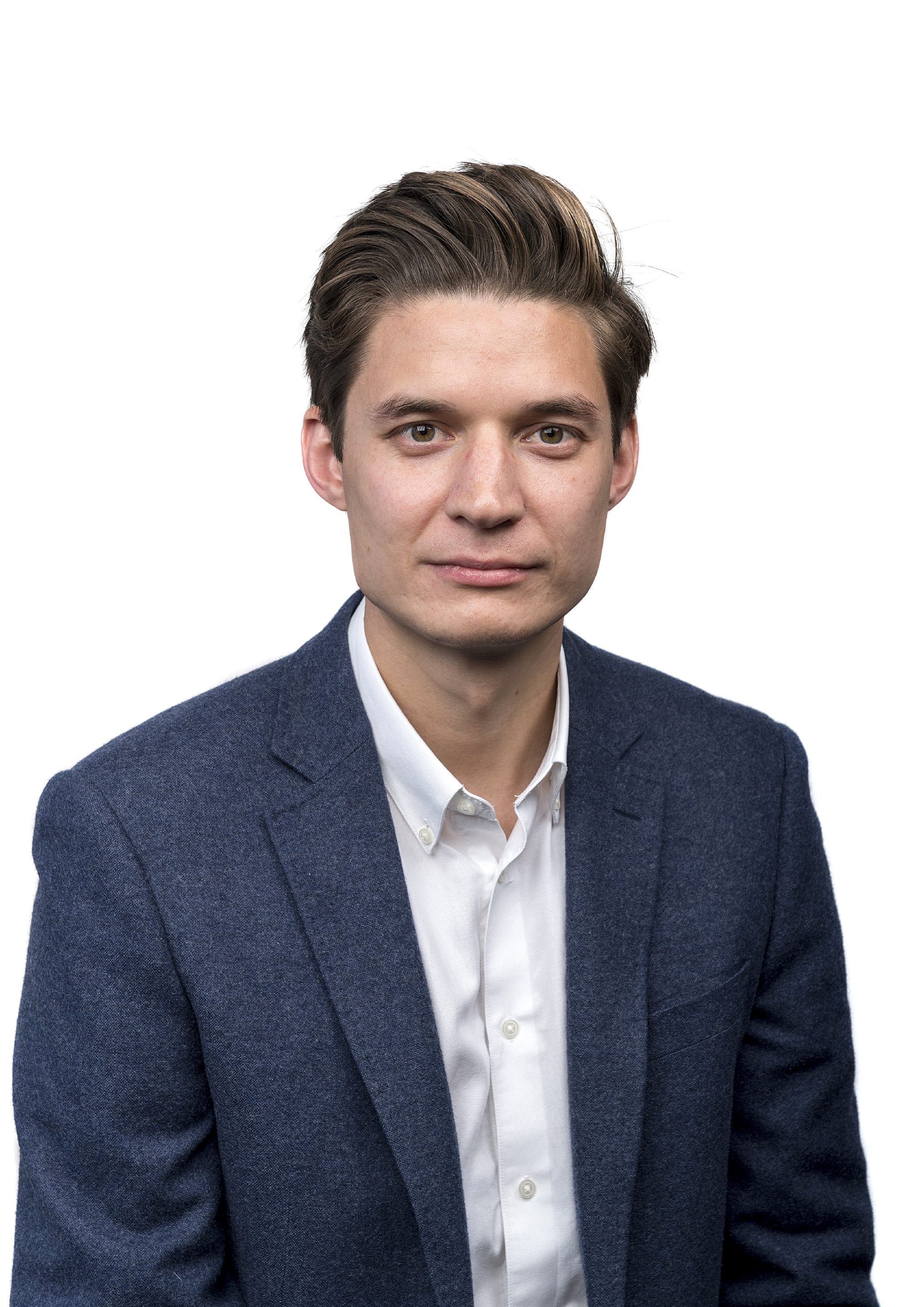}}]{Marc Stettler}
is a Senior Lecturer in Transport and the Environment in the Centre for Transport Studies and Director of the Transport \& Environment Laboratory. Prior to joining Imperial, Marc was a research associate in the Centre for Sustainable Road Freight and Energy Efficient Cities Initiative at the University of Cambridge, where he also completed his PhD.

His research aims to quantify and reduce environmental impacts from transport using a range of emissions measurement and modelling tools. Examples of recent research projects include: quantifying real-world vehicle emissions; using real-world vehicle emissions data to improve emissions models; evaluating economic and environmental benefits of Kinetic Energy Recovery Systems (KERS) for road freight; and quantifying aircraft emissions at airports. Marc is a member of the LoCITY ‘Policy, Procurement, Planning and Practice’ working group and the EQUA Air Quality Index Advisory Board. 
\end{IEEEbiography}

\end{document}